\DeclareMathOperator*{\argmin}{arg\,min}
\DeclareMathOperator{\sign}{sign}
\DeclareMathOperator{\integer}{int}
\DeclareMathOperator{\tekum}{\theta}
\DeclareMathOperator{\anchor}{anc}
\newcommand{\T}{\scalebox{.692}[1.0]{$\mathrm{T}$}}
\newcommand{\Tset}{\mathbb{T}}
\newcommand{\mdoubleplus}{\mathbin{+\mkern-10mu+}}
\newcommand{\cdotsx}{\mathinner{\cdotp\mkern-2mu\cdotp\mkern-2mu\cdotp}}
\newcommand{\ldotsx}{\mathinner{\ldotp\mkern-2mu\ldotp}}
\pgfplotsset{compat=1.17}
\definecolor{sign}{HTML}{b02a2d}
\definecolor{direction}{HTML}{007b39}
\definecolor{regime}{HTML}{046f87} % oklch(0.5 0.09 220)
\definecolor{exponent}{HTML}{834f78} % lch(0.5 0.09 335)
\definecolor{fraction}{HTML}{636363}
\definecolor{error}{HTML}{BD002A}
\definecolor{cellbg}{HTML}{EDEDED}
\definecolor{t-exponent}{HTML}{1f5dc2}
\definecolor{p-sign}{HTML}{FF5454}
\definecolor{p-regime}{HTML}{CC9966}
\definecolor{p-regime-term}{HTML}{996633}
\definecolor{p-exponent}{HTML}{0080FF}
\definecolor{p-fraction}{HTML}{000000}
\DeclareSIUnit{\parsec}{pc}
\DeclareSIUnit{\electronvolt}{eV}
\def\lst@makecaption{%
  \def\@captype{table}%
  \@makecaption
}
\newcommand*\notsotiny{%
  \@setfontsize\notsotiny{5}{5}%
}
\begin{document}

\title{Tekum: Balanced Ternary Tapered Precision Real Arithmetic}
\titlerunning{Tekum: Balanced Ternary Tapered Precision Arithmetic}
\author{Laslo Hunhold\,\orcidlink{0000-0001-8059-0298}}
\authorrunning{L. Hunhold}
% First names are abbreviated in the running head.
% If there are more than two authors, 'et al.' is used.
\institute{%
	Parallel and Distributed Systems Group\\
	University of Cologne, Cologne, Germany\\
	\email{hunhold@uni-koeln.de}
}
\maketitle

\begin{abstract}
In light of recent hardware advances, it is striking that real arithmetic in balanced ternary logic has received almost no attention in the literature. This is particularly surprising given ternary logic's promising properties, which could open new avenues for energy-efficient computing and offer novel strategies for overcoming the memory wall.
\par
This paper revisits the concept of tapered precision arithmetic, as used in posit and takum formats, and introduces a new scheme for balanced ternary logic: tekum arithmetic. Several fundamental design challenges are addressed along the way. The proposed format is evaluated and shown to exhibit highly promising characteristics. In many respects, it outperforms both posits and takums. As ternary hardware matures, this work represents a crucial step toward unlocking the full potential of real-number computation in ternary systems, laying the groundwork for a new class of number formats designed from the ground up for a new category of next-generation hardware.
\end{abstract}
\keywords{%
	tekum arithmetic \and balanced ternary logic \and tapered precision 
	\and real arithmetic \and floating-point arithmetic \and posit 
	arithmetic \and takum arithmetic
}

\section{Introduction}
Computer science is a comparatively young scientific discipline. The first 
programmable 
computers were constructed little more than eighty years ago, \textsc{Zuse}'s 
Z3 in 1941 and 
the ENIAC by \textsc{Mauchly} and \textsc{Eckert} in 1946. These early machines 
were based on 
the binary system, which has since shaped the foundations of computer science. 
The prevailing 
perception of computers is that, at their core, they operate through states of 
\enquote{on} 
or \enquote{off}. Decades of research and development have refined and 
optimised the design 
of binary computers.
\par
History, however, can narrow perspective. A lesser-known paradigm of 
computation is ternary 
logic. Instead of two states, ternary logic uses three; instead of binary 
digits (bits), one 
has ternary digits (trits), where a single trit contains $\log_2(3) \approx 
1.58$ bits of 
information. The motivation for considering such systems lies in the concept of 
\emph{radix 
economy}, the cost of representing a positive integer $n \in 
\mathbb{N}_1$ in a 
given base
$b \in \mathbb{R}_+$. This cost is defined as $C(b,n) := \lfloor \log_b(n) + 1 
\rfloor \cdot 
b$, namely the number of digits required to represent $n$ in base $b$, 
multiplied by the base 
$b$ itself \cite{2001-third_base}. Intuitively, larger bases produce shorter 
representations 
but incur greater hardware complexity, whereas smaller bases ease 
implementation but require 
more digits. It can be shown analytically that base $\mathrm{e} \approx 2.718$ 
minimises cost 
asymptotically, with $3$ being the nearest integer. If one assumes that base 
size correlates 
with hardware complexity, ternary arithmetic therefore offers the most balanced 
trade-off 
between circuit complexity and representational efficiency. By contrast, binary 
arithmetic 
simplifies circuit design, but at the expense of higher representational cost 
and greater 
communication overhead.
\par
Hardware investigations have compared binary and ternary implementations. While 
ternary 
circuits achieve comparable computational speed, they typically require more 
logic; for 
example, a ternary adder needs approximately \SI{62}{\percent} more circuitry, 
even 
accounting for the higher information density of trits 
\cite{2012-ternary_manifesto}. This 
finding has often been cited as a reason to dismiss ternary logic. Yet, such 
dismissal 
warrants reconsideration in light of modern computing trends: the dominant 
bottleneck today 
is not raw computational speed, but memory bandwidth 
\cite{1999-memory_wall,2024-memory_wall}. Under these conditions, ternary logic 
acquires 
renewed relevance. Recent advances, such as ternary deep neural networks and 
carbon nanotube 
transistors, which operative natively in ternary, further strengthen its value 
proposition.
\par
The full potential of ternary computation emerges when digits are chosen from 
$\{-1,0,1\}$ 
rather than $\{0,1,2\}$. This \emph{balanced ternary} system is unique to odd 
bases and may 
appear unfamiliar, given the dominance of even bases such as 2, 8, 10, 16, and 
64, or 
historically base 60. In balanced ternary, the digit $-1$ is commonly denoted 
by a special 
symbol, here written as $\T$. For instance, $2$ is represented as $1\T$ 
($1\cdot 3 + 
(-1)\cdot 1$), and $-4$ as $\T\T$ ($(-1)\cdot 3 + (-1)\cdot 1$). This 
representation has 
several appealing properties: integers are inherently signed, and perfectly 
symmetric around 
zero, without unlike two's complement binary integers. Negation is achieved 
simply by 
inverting digits, and rounding is the same as truncation, eliminating the need 
for carries. 
\textsc{Knuth} has described it as \enquote{perhaps the prettiest number system 
of all} 
\cite[207]{1997-knuth}.
\par
Balanced ternary has been studied mainly in the context of integers. 
Real-number 
representations, such as floating-point formats, remain surprisingly 
underexplored. One might 
expect that properties such as rounding by truncation could mitigate issues 
inherent in 
binary floating-point arithmetic, where carries complicate rounding. Yet the 
literature is 
sparse. The only notable proposal is \texttt{ternary27} by \textsc{O'Hare} 
\cite{2021-ternary27}, which is heavily inspired by IEEE 754. Although it 
achieves a 
favourable dynamic range, its design is inefficient, wasting many 
representations. This 
inefficiency is more detrimental in ternary than in binary arithmetic, as each 
trit carries 
more information than a bit. While the IEEE 854-1987 standard specifies 
radix-independent 
floating-point numbers, it does not address balanced bases.
\par
Given current hardware developments, it is undesirable for theoretical progress 
to lag behind 
practical advances. This paper therefore revisits balanced ternary real 
arithmetic and makes 
the 
following contributions. After introducing the fundamentals of balanced ternary 
in 
Section~\ref{sec:balanced_ternary}, we identify key design challenges in 
Section~\ref{sec:three_filters} that up to the design of takums 
\cite{2024-takum} were 
insurmountable. We then propose a new balanced ternary number format, 
\emph{tekum}, in Section~\ref{sec:tekum_definition}, and evaluate its 
properties in 
Section~\ref{sec:evaluation}, before drawing conclusions in 
Section~\ref{sec:conclusion}.
% and wastes a lot of representations, which weighs even heavier in ternary 
%than binary 
%logic, where every trit counts more than every bit. While it's a venial sin to 
%waste a bit, 
%it's a deadly sin in ternary.
\section{Balanced Ternary}
\label{sec:balanced_ternary}
In this section we introduce balanced ternary and all the tools necessary
to define and analyse the subsequently introduced tekum arithmetic. We
follow a formal approach given the unintuitiveness compared to
standard binary logic. Additionally, this paper is more or less an
inaugural paper on balanced ternary real arithmetic, and it's helpful to
suggest a notation for this nascent field.
\begin{definition}[balanced ternary strings]
	Let $n \in \mathbb{N}_0$.
	The set of $n$-trit balanced ternary strings is defined as
	$\Tset_n := {\{ \T, 0, 1 \}}^n$ with $\T := -1$ .
	By convention $\Tset_n \ni \bm{t} := (\bm{t}_{n-1},\dots,\bm{t}_0) =:
	\bm{t}_{n-1}\cdots\bm{t}_0$.
\end{definition}
The first thing we notice is that, compared to posits and takums, we
denote ternary strings with bold lowercase letters instead of uppercase
letters. This is more in line with the common notation for vectors in
mathematics and computer science.
\begin{definition}[concatenation]
	Let $m,n \in \mathbb{N}_0$, $\bm{t} \in \Tset_m$ and $\bm{u} \in \Tset_n$.
	The concatenation of $\bm{t}$ and $\bm{u}$ is defined as
	$\bm{t} \mdoubleplus \bm{u} :=
	(\bm{t}_{m-1},\dots,\bm{t}_0,\bm{u}_{n-1},\dots,\bm{u}_0) \in \Tset_{m+n}$.
\end{definition}
This concatenation operator is distinct from the more relaxed, overloaded
concatenation notation with parentheses.
\begin{definition}[integer mapping]
	Let $n \in \mathbb{N}_0$. The integer mapping
	$\integer_n \colon \Tset_n \to \left\{ -\frac{1}{2} (3^n-1), \dots,
	\frac{1}{2} (3^n-1) \right\}$ is defined as
	$\integer_n(\bm{t}) := \sum_{i=0}^{n-1} \bm{t}_i 3^i$.
\end{definition}
Unlike with two's complement binary integers, there is no explicit sign bit.
Instead the sign is indicated by the most significant non-zero trit, and there
is no concept of an unsigned integer in balanced ternary.
The range of integer values follows from the finite geometric series obtained 
with the extremal values $\integer_n(\T\dots\T)$ and $\integer_n(1\dots1)$,
and is perfectly symmetric around zero (unlike two's complement integers).
\begin{definition}[negation]
	Let $n \in \mathbb{N}_0$ and $\bm{t} \in \Tset_n$. The negation of $\bm{t}$
	is defined as $-\bm{t} := (-\bm{t}_{n-1},\dots,-\bm{t}_0)$.
\end{definition}
By construction we can see that it holds $\integer_n(-\bm{t}) = 
-\integer_n(\bm{t})$, making it well-defined. The negation operation outlines 
another 
distinctive difference between two's complement binary and balanced ternary 
integers: Whereas to negate the underlying integral value you need to negate 
all bits and add one with the former, which is expensive in hardware due to the 
carry-chain, negation in balanced ternary is a very cheap, entrywise operation.
\begin{definition}[addition and subtraction]
	Let $n \in \mathbb{N}_0$ and $\bm{t}, \bm{u} \in \Tset_n$. The addition
	of $\bm{t}$ and $\bm{u}$ is defined with
	$s := \integer_n(\bm{t}) + \integer_n(\bm{u})$ as
	\begin{equation}
		\bm{t} + \bm{u} := \integer_n^\text{inv}\left(
		\begin{cases}
			s + \frac{1}{2} (3^n + 1) & s < -\frac{1}{2} (3^n - 1)\\
			s - \frac{1}{2} (3^n + 1) & s > \frac{1}{2} (3^n - 1)\\
			s & \text{otherwise}
		\end{cases}
		\right).
	\end{equation}
	The subtraction of $\bm{t}$ and $\bm{u}$ is defined as
	$\bm{t} - \bm{u} := \bm{t} + (-\bm{u})$.
\end{definition}
Despite the complex appearance, the addition is merely defined here as a fixed-size 
operation, taking two $n$-trit inputs and yielding an $n$-trit output, 
discarding any carries that might occur 
due to over- or underflow and using the inverse integer mapping to obtain the 
ternary string. If carries should be taken into account, the inputs 
need to be extended to $n+1$-trit strings before the addition. Especially 
notable is the definition of the subtraction: Given negation is so cheap, you 
can define it in terms of the addition.
\begin{definition}[modulus]
	Let $n \in \mathbb{N}_0$ and $\bm{t} \in \Tset_n$. The modulus of $\bm{t}$ is defined as
	\begin{equation}
		|\bm{t}| :=
		\begin{cases}
			-\bm{t} & \integer_n(\bm{t}) < 0\\
			\bm{t}  & \integer_n(\bm{t}) \ge 0.
		\end{cases}
	\end{equation}
\end{definition}
With these definitions in place we can proceed with the derivation of the
tekum arithmetic.
\section{The Three Filters}
\label{sec:three_filters}
In this section, we derive the central contribution of this paper: a new ternary real arithmetic.
As outlined in the introduction, the current state of the art in ternary real arithmetic is surprisingly
underdeveloped. This may be attributed in part to the relatively small number of arithmetic designers
worldwide, as well as the limited training and interest in ternary hardware to date. However, a closer
examination of recently established design criteria for number formats reveals three possible consecutive
filters that may have led previous researchers to attempt, and ultimately abandon, the pursuit of a
new format.
\par
The term filter is borrowed from Robert \textsc{Hanson}'s notion of the 
\enquote{Great Filter}, which refers
to the fundamental developmental barriers a civilisation must overcome, proposed to explain the
(apparent) absence of advanced extraterrestrial life in 
the universe \cite{1998-great_filter}. Here, we adapt 
the term
\enquote{filter} to describe conceptual or technical barriers that have likely hindered the development of a
proper ternary real arithmetic. We outline and reflect on these filters throughout the derivation, highlighting how each
is encountered and overcome in the formulation of our format.
\par
Fundamentally, the design of a real number format involves defining a mapping between discrete representations
(in this case, trit strings) and a corresponding set of real values. This set is constructed as a subset of the
real numbers~$\mathbb{R}$, often extended with additional non-number representations. The most common such extension is
the \emph{affinely extended real numbers}, $\mathbb{R} \cup \{-\infty, +\infty\}$. A less common alternative is the
\emph{projectively extended real numbers}, $\mathbb{R} \cup \{\infty\}$, where division by zero is well-defined.
\par
Further extensions typically include one or more 
special representations to
denote undefined or invalid values.
These are commonly referred to as $\mathrm{NaN}$ (not a 
number) or 
$\mathrm{NaR}$ (not a real), and are used to
represent non-real complex numbers or domain violations (e.g. $\ln(-1)$). While IEEE~754 floating-point numbers are
based on the affinely extended real numbers and have multiple $\mathrm{NaN}$ representations, formats such as posits
and takums instead exclude explicit representations of infinity and use a single $\mathrm{NaR}$ for both infinities
and non-real numbers.
\par
This was not always the case for posits. Early versions of the posit format only included a single infinity non-number
representation~\cite{posits-beating_floating-point-2017, lindstrom-2018}. However, this was later revised in the
draft standard~\cite{posits-standard-2022}, which removed the infinity representation in favour of a single
$\mathrm{NaR}$. The rationale for this decision is that the encoding scheme only permits a single special
non-numeric value, and the practical utility of being able to propagate $\mathrm{NaR}$ in computations to signal error
conditions was deemed to outweigh the mathematical elegance of a single infinity. The \emph{takum}
format subsequently adopted this revised approach~\cite{2024-takum}.
\par
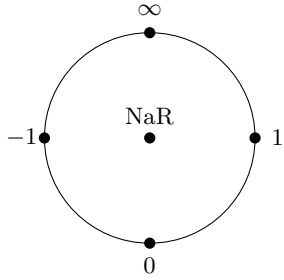
\begin{figure}[tbp]
	\centering
	\begin{tikzpicture}
		\tikzstyle{every node}=[font=\small]

		\def\circleradius{1.4cm}
		\def\beltdistance{0.3cm}
		\def\innerdistance{0.3cm}

		\draw (0,0) node[circle, inner sep=1.5pt, fill] {}
			circle [radius=\circleradius];
		\node[minimum height=0.4cm, inner sep=0] at (0, \beltdistance) {$\mathrm{NaR}$};
		\node[minimum width=0.5cm, minimum height=0.5cm, inner sep=0] at (0, -\beltdistance) {{\notsotiny}};

		\foreach \angle/\outerlabel/\innerlabel in {
			180/$-1$/,
			270/$0$/,
			0/$1$/,
	 		90/$\infty$/%
		} {
			\ifthenelse{\angle > 90 \AND \angle < 270}
				{\pgfmathsetmacro\rot{\angle + 180}}
				{
					\ifthenelse{\angle = 90 \OR \angle = 270}
						{\pgfmathsetmacro\rot{0}}
						{\pgfmathsetmacro\rot{\angle}}
				}

			\ifthenelse{\angle = 0 \OR \angle = 90 \OR \angle = 180 \OR \angle = 270}
				{\pgfmathsetmacro\noderadius{1.5pt}}
				{\pgfmathsetmacro\noderadius{1.0pt}}

			\draw (\angle: \circleradius)
				node[circle, inner sep=\noderadius, fill] {};

			\node[rotate=\rot, anchor=center, minimum width=0.4cm, minimum height=0.4cm, inner sep=0] at
				(\angle: {\circleradius + \beltdistance}) {\outerlabel};
			\node[rotate=\rot, anchor=center, minimum width=0.5cm, minimum height=0.5cm, inner sep=0] at
				(\angle: {\circleradius - \innerdistance}) {
					{\notsotiny \textcolor{fraction}{\innerlabel}}};
		}
	\end{tikzpicture}
	\caption{A visualisation of the real wheel algebra with the customary positioning of the bottom element
	$\mathrm{NaR}$ in the center as a \enquote{wheel hub} that distinguishes it from the projectively extended
	real numbers.}
	\label{fig:wheel}
\end{figure}%
The most mathematically complete and elegant alternative would be a set of represented values that includes
both $\infty$ and $\mathrm{NaR}$. Such a structure does exist in the form of the \emph{wheel algebra} applied to reals,
$\mathbb{R} \cup \{ \infty, \mathrm{NaR} \}$~\cite{wheels-2004}. In this framework, one defines $\infty := 1/0$
and $\mathrm{NaR} := 0/0$, where $\mathrm{NaR}$ is referred to as the \enquote{bottom element}. The wheel algebra supports
multiplication and inversion across all its elements, offering a closed and consistent model for arithmetic.
Figure~\ref{fig:wheel} illustrates a common visualisation of the wheel algebra: a circle representing the
extended real values, with $\mathrm{NaR}$ at the centre, resembling a wheel with a hub.
The numbers $1$ and $-1$ are explicitly included as fixed points under inversion (reflection of the circle across
the $x$-axis), just as $0$ and $\infty$ serve as fixed points under negation (reflection across the $y$-axis).
\par
In light of the mathematical elegance of the real wheel algebra, one might naturally ask why
formats such as posits and takums do not include both $\infty$ and $\mathrm{NaR}$. If we consider
discretising this set, we would ideally require an equal number of values in each quadrant of the wheel
so that both negation and inversion map represented values back into the set. In other words, after
assigning the five fundamental elements, $\mathrm{NaR}$, $-1$, $0$, $1$, and $\infty$, the remaining number
of unassigned states must be divisible by four.
\par
In binary representations, this condition can never be satisfied. The total number of states is a power of
two (and thus even), and subtracting the five fixed elements yields an odd number, which cannot be evenly
divided by four. Ternary strings, on the other hand, have a total number of states equal to a power of three,
which is always odd. Subtracting five from an odd number yields an even result, raising the question: could
this make the construction viable?
\par
\begin{figure}[tbp]
	\centering
	\begin{subfigure}[c]{0.32\textwidth}
		\centering
		\begin{tikzpicture}
			\def\circleradius{1.4cm}
			\def\beltdistance{0.3cm}
			\def\innerdistance{0.29cm}

			\draw (0,0) node[circle, inner sep=1.5pt, fill] {}
				circle [radius=\circleradius];
			\node[minimum height=0.4cm, inner sep=0] at (0, \beltdistance) {$\mathrm{NaR}$};
			\node[minimum width=0.5cm, minimum height=0.5cm, inner sep=0] at (0, -\beltdistance) {{\notsotiny \textcolor{fraction}{$\T$}}};

			\foreach \angle/\outerlabel/\innerlabel in {
				270/$0$/$0$,
		 		90/$\infty$/$1$%
			} {
				\ifthenelse{\angle > 90 \AND \angle < 270}
					{\pgfmathsetmacro\rot{\angle + 180}}
					{
						\ifthenelse{\angle = 90 \OR \angle = 270}
							{\pgfmathsetmacro\rot{0}}
							{\pgfmathsetmacro\rot{\angle}}
					}

				\ifthenelse{\angle = 0 \OR \angle = 90 \OR \angle = 180 \OR \angle = 270}
					{\pgfmathsetmacro\noderadius{1.5pt}}
					{\pgfmathsetmacro\noderadius{1.0pt}}

				\draw (\angle: \circleradius)
					node[circle, inner sep=\noderadius, fill] {};

				\node[rotate=\rot, anchor=center, minimum width=0.4cm, minimum height=0.4cm, inner sep=0] at
					(\angle: {\circleradius + \beltdistance}) {\outerlabel};
				\node[rotate=\rot, anchor=center, minimum width=0.5cm, minimum height=0.5cm, inner sep=0] at
					(\angle: {\circleradius - \innerdistance}) {
						{\notsotiny \textcolor{fraction}{\innerlabel}}};
			}
		\end{tikzpicture}
		\subcaption{$n=1$}
	\end{subfigure}
	\begin{subfigure}[c]{0.32\textwidth}
		\centering
		\begin{tikzpicture}
			\tikzstyle{every node}=[font=\small]

			\def\circleradius{1.4cm}
			\def\beltdistance{0.3cm}
			\def\innerdistance{0.3cm}

			\draw (0,0) node[circle, inner sep=1.5pt, fill] {}
				circle [radius=\circleradius];
			\node[minimum height=0.4cm, inner sep=0] at (0, \beltdistance) {$\mathrm{NaR}$};
			\node[minimum width=0.5cm, minimum height=0.5cm, inner sep=0] at (0, -\beltdistance) {{\notsotiny \textcolor{fraction}{$\T\T$}}};

			\foreach \angle/\outerlabel/\innerlabel in {
				135//$\T0$,
				180/$-1$/$\T1$,
				225//$0\T$,
				270/$0$/$00$,
				315//$01$,
				0/$1$/$1\T$,
		 		45//$10$,
		 		90/$\infty$/$11$%
			} {
				\ifthenelse{\angle > 90 \AND \angle < 270}
					{\pgfmathsetmacro\rot{\angle + 180}}
					{
						\ifthenelse{\angle = 90 \OR \angle = 270}
							{\pgfmathsetmacro\rot{0}}
							{\pgfmathsetmacro\rot{\angle}}
					}

				\ifthenelse{\angle = 0 \OR \angle = 90 \OR \angle = 180 \OR \angle = 270}
					{\pgfmathsetmacro\noderadius{1.5pt}}
					{\pgfmathsetmacro\noderadius{1.0pt}}

				\draw (\angle: \circleradius)
					node[circle, inner sep=\noderadius, fill] {};

				\node[rotate=\rot, anchor=center, minimum width=0.4cm, minimum height=0.4cm, inner sep=0] at
					(\angle: {\circleradius + \beltdistance}) {\outerlabel};
				\node[rotate=\rot, anchor=center, minimum width=0.5cm, minimum height=0.5cm, inner sep=0] at
					(\angle: {\circleradius - \innerdistance}) {
						{\notsotiny \textcolor{fraction}{\innerlabel}}};
			}
		\end{tikzpicture}
		\subcaption{$n=2$}
	\end{subfigure}
	\begin{subfigure}[c]{0.32\textwidth}
		\centering
		\begin{tikzpicture}
			\tikzstyle{every node}=[font=\small]

			\def\circleradius{1.4cm}
			\def\beltdistance{0.3cm}
			\def\innerdistance{0.31cm}

			\draw (0,0) node[circle, inner sep=1.5pt, fill] {}
				circle [radius=\circleradius];
			\node[minimum height=0.4cm, inner sep=0] at (0, \beltdistance) {$\mathrm{NaR}$};
			\node[minimum width=0.5cm, minimum height=0.5cm, inner sep=0] at (0, -\beltdistance) {{\notsotiny \textcolor{fraction}{$\T\T\T$}}};

			\foreach \angle/\outerlabel/\innerlabel in {
				105//$\T\T0$,
				120//$\T\T1$,
				135//$\T0\T$,
				150//$\T00$,
				165//$\T01$,
				180/$-1$/$\T1\T$,
				193//$\T10$,
				206//$\T11$,
				219//$0\T\T$,
				231//$0\T0$,
				244//$0\T1$,
				257//$00\T$,
				270/$0$/$000$,
				283//$001$,
				296//$01\T$,
				309//$010$,
				321//$011$,
				334//$1\T\T$,
				347//$1\T0$,
				0/$1$/$1\T1$,
		 		15//$10\T$,
		 		30//$100$,
		 		45//$101$,
		 		60//$11\T$,
		 		75//$110$,
		 		90/$\infty$/$111$%
			} {
				\ifthenelse{\angle > 90 \AND \angle < 270}
					{\pgfmathsetmacro\rot{\angle + 180}}
					{
						\ifthenelse{\angle = 90 \OR \angle = 270}
							{\pgfmathsetmacro\rot{0}}
							{\pgfmathsetmacro\rot{\angle}}
					}
				\ifthenelse{\angle = 0 \OR \angle = 90 \OR \angle = 180 \OR \angle = 270}
					{\pgfmathsetmacro\noderadius{1.5pt}}
					{\pgfmathsetmacro\noderadius{1.0pt}}
		
				\draw (\angle: \circleradius)
					node[circle, inner sep=\noderadius, fill] {};
		
				\node[rotate=\rot, anchor=center, minimum width=0.4cm, minimum height=0.4cm, inner sep=0] at
					(\angle: {\circleradius + \beltdistance}) {\outerlabel};
				\node[rotate=\rot, anchor=center, minimum width=0.5cm, minimum height=0.5cm, inner sep=0] at
					(\angle: {\circleradius - \innerdistance}) {
						{\notsotiny \textcolor{fraction}{\innerlabel}}};
			}
		\end{tikzpicture}
		\subcaption{$n=3$}
	\end{subfigure}
	\caption{
		Mapping of ternary strings of lengths $n \in \{1,2,3\}$ to the real wheel algebra.
	}
	\label{fig:wheel-discrete}
\end{figure}%
We assign the bottom element, $\mathrm{NaR}$, to the lexicographically smallest balanced ternary integer,
$\T\cdots\T$, ensuring that it is also the smallest element under comparison. This approach is consistent with
the total ordering predicate introduced in the 2019 revision of the IEEE~754 standard, which similarly defines
$\mathrm{NaN}$ as smaller than all real numbers \cite[§5.10]{ieee754-2019}. Proceeding in increasing order, we
assign the remaining values such that $0\cdots0$ always maps to zero, and the lexicographically largest balanced
ternary integer, $1\cdots1$, maps to $\infty$. The result of this assignment scheme for $n \in \{1,2,3\}$ is
illustrated in Figure~\ref{fig:wheel-discrete}.
\subsection{Filter 1: Asymmetry}
The attentive reader may have already observed that while the assignment illustrated in Figure~\ref{fig:wheel-discrete}
works well for $n \in \{1,2\}$, it introduces an asymmetry for $n = 3$: the upper quadrants contain five elements,
whereas the lower quadrants contain six. This irregularity marks the first filter in our derivation process.
It stems from the fact that ternary strings do not yield state counts as conveniently structured as in the binary case,
where the total number of states is $2^n$ for $n$ bits, and the number of assignable values per quadrant,
namely $(2^n - 4)/4$ (after reserving the four special elements $\mathrm{NaR}$, $-1$, $0$, and $1$), is always an
integer, specifically $2^{n-2} - 1$.
\par
In contrast, no such regularity emerges for $3^n - 5$ (or $3^n - 4$ if we omit $\infty$). Nevertheless,
a single viable path is shown in the following
\begin{proposition}
	Let $n \in \mathbb{N}_0$. It holds $4 \mid (3^{2n} - 5)$ and $4 \nmid (3^{n} - 4)$.
\end{proposition}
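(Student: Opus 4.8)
The plan is to reduce both divisibility statements to elementary congruences modulo~$4$, exploiting that the two base cases $3^2$ and $3$ have especially clean residues. The key observations are $3^2 = 9 \equiv 1 \pmod 4$ and $3 \equiv -1 \pmod 4$; everything else follows by taking powers of these two congruences. No induction is strictly necessary, since raising a congruence to the $n$-th power is valid directly, though one could phrase it inductively if preferred.

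For the first claim, $4 \mid (3^{2n} - 5)$, I would write $3^{2n} = (3^2)^n = 9^n$ and use $9 \equiv 1 \pmod 4$ to conclude $9^n \equiv 1^n = 1 \pmod 4$. Then $3^{2n} - 5 \equiv 1 - 5 = -4 \equiv 0 \pmod 4$, which is exactly the assertion $4 \mid (3^{2n}-5)$. For the second claim, $4 \nmid (3^n - 4)$, I would use $3 \equiv -1 \pmod 4$ to obtain $3^n \equiv (-1)^n \pmod 4$, so that $3^n$ is congruent to either $1$ (for even $n$) or $3$ (for odd $n$) modulo~$4$. Since $4 \equiv 0 \pmod 4$, it follows that $3^n - 4 \equiv (-1)^n \pmod 4 \in \{1,3\}$, and in particular $3^n - 4 \not\equiv 0 \pmod 4$, establishing $4 \nmid (3^n - 4)$.

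There is no real obstacle here; the only point requiring a moment's care is the edge case $n = 0 \in \mathbb{N}_0$, which the congruence argument handles uniformly: the first claim gives $3^0 - 5 = -4$ (divisible by $4$) and the second gives $3^0 - 4 = -3$ (not divisible by $4$), both consistent with the general residues above. The conceptual content worth emphasising in the surrounding text is the contrast between the two bases modulo~$4$: $9 \equiv 1$ forces divisibility for the squared expression, whereas $3 \equiv -1$ oscillates and thereby permanently obstructs divisibility in the $\infty$-omitting count $3^n - 4$.
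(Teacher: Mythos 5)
Your proof is correct and takes essentially the same approach as the paper: both claims are reduced to congruences modulo $4$, with the first settled by $9 \equiv 1 \pmod 4$ raised to the $n$-th power. The only cosmetic difference is in the second claim, where the paper observes that $3^n - 4 \equiv 0 \pmod 4$ would force $4 \mid 3^n$, impossible since $3^n$ is odd, whereas you compute $3^n \equiv (-1)^n \pmod 4$ directly; both are equally valid one-line finishes.
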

\begin{proof}
	With $a \equiv b \pmod c \rightarrow a^n \equiv b^n \pmod c$ for
	$a,b,c \in \mathbb{Z}$ via 
	\cite[Lemma~5.2c]{1976-apostol-number_theory} 
	it holds
	\begin{align}
		4 \mid (3^{2n} - 5)
			&\Leftrightarrow 3^{2n} - 5 \equiv 0 \pmod 4\\
		&\Leftrightarrow 9^{n} \equiv 5 \pmod 4\\
		&\Leftrightarrow 9^{n} \equiv 1 \pmod 4\\
		&\Leftrightarrow 9^{n} \equiv 1^n \pmod 4\\
		&\Leftarrow 9 \equiv 1 \pmod 4.
	\end{align}
	We can also show
	\begin{align}
		4 \mid (3^{n} - 4) \Leftrightarrow 3^n - 4 \equiv 0 \pmod 4
		\Leftrightarrow 3^n \equiv 0 \pmod 4 \Leftrightarrow 4 \mid 3^n,
	\end{align}
	however $3^n$ is never divisible by four as it's odd.\qed
\end{proof}
This proposition demonstrates two key points. First, it confirms that a meaningful ternary discretisation of the real wheel
algebra is indeed possible, provided that $n$ is even. Second, it highlights that reserving only four special
elements, the approach taken by posits and takums, can never yield a symmetric distribution, and thus is
fundamentally unsuitable for this purpose. We therefore proceed under the assumption that $n$ is even and
use the assignment as outlined earlier.
\subsection{Filter 2: Misfit Tool}
Despite having overcome the first filter, and having settled on the general approach to
restrict $n$ to even numbers, we have yet to define the mapping rule for the values within
each quadrant beyond the fundamental elements. This mapping should ideally follow the tapered
precision paradigm. However, here we encounter the second filter: the prefix strings used
in posit arithmetic cannot be translated directly into ternary logic, rendering them an
ill-suited tool for ternary arithmetic. Similarly, takum's fixed-size regime offers no
straightforward adaptation to the ternary case.
\par
Let us first consider a positive trit string $\bm{t} \in \{ 0\cdots 01, \dots, 1\cdots 10 \}$. Our aim is to
process this string in such a way that it yields a meaningful mapping to the positive real numbers. 
The initial observation is that, given the extreme cases $0\cdots 0$ (representing $0$) and $1\cdots 1$
(representing $\infty$), and knowing that all quadrants contain the same number of elements, the value $1$
must lie exactly in the middle. The corresponding ternary sequence for $1$ is $1\T\cdots1\T$, since
$1\T + 1\T = 11$ holds, a property that extends to any trit string of even length.
Given that $1$ corresponds to exponent zero, a natural approach is to subtract $1\T\cdots1\T$ from $\bm{t}$.
This yields zero when $\bm{t} = 1\T\cdots1\T$, a positive value when $\bm{t}$ lies in the upper quadrant $(1,\infty)$,
and a negative value when $\bm{t}$ is in the lower quadrant $(0,1)$.
\par
By construction, the expression $(\bm{t} - 1\T\cdots 1\T)$ lies within the integer set
$\{ \T1\cdots \T10\T, \dots, 1\T\cdots 1\T01 \}$. We can now employ the takum approach, using a fixed number
of trits to encode the so-called \enquote{regime} value, which serves as the parameter for a variable-length
exponent. Although this may seem counter-intuitive given, after all, the regime values appear to be evenly spaced,
the key insight is that each additional trit used for the exponent reduces the number of trits available for
encoding the fraction. As a result, the number density decreases by a factor of three for each trit allocated
to the exponent, thus realising tapered precision.
\par
The number of trits allocated to the regime represents a trade-off: fewer regime trits allow for higher maximum
precision, but at the expense of the number of representable regime values. We now explore this design choice in
more detail. If we assign the first two trits of $\bm{t} - 1\T\cdots 1\T$ to represent the regime, the regime trits
fall within the range $\{ \T 1, \dots, 1\T \}$, corresponding to the integer interval $\{-2, \dots, 2\}$. In
contrast, assigning the first four trits as regime results in a range of $\{ \T1\T1, \dots, 1\T1\T \}$, mapping to
the integer interval $\{-20, \dots, 20\}$, a range far exceeding the requirements for encoding exponent lengths.
A compromise is to use three regime trits, yielding a regime range of $\{ \T1\T, \dots, 1\T1 \}$, corresponding
to the integer interval $\{-7, \dots, 7\}$. This choice offers a balanced trade-off between exponent flexibility
and precision. The precise role of the regime value in variable-length exponent encoding will be discussed later.
\par
The final aspect to address is the treatment of negative trit strings
$\bm{t} \in \{ \T\cdots\T0, \dots, 0\cdots0\T \}$. A straightforward approach is to compute the modulus
$|\bm{t}|$ and process the result. This is justified by the symmetry of the representation: both upper quadrants
$(-\infty, -1)$ and $(1, \infty)$ correspond to the same positive exponents, while the lower quadrants $(-1, 0)$ and
$(0, 1)$ share the same negative exponents. The only difference lies in the sign. Thus, for all trit strings
$\bm{t} \in \Tset_n$, we shall compute $|\bm{t}| - 1\T\cdots 1\T$ and assign the first three trits to be the regime
trits $\textcolor{regime}{\bm{r}}$. As we \enquote{anchor} the trits onto a workable representation,
we give the following corresponding
\begin{definition}[anchor function]
	Let $n \in 2\mathbb{N}_0$. The anchor function $\anchor_n \colon \Tset_n \to \Tset_n$ is defined as
	$\anchor_n(\bm{t}) = |\bm{t}| - 1\T\cdots1\T$.
\end{definition}
\begin{figure}[tb]
	\centering
	\begin{tikzpicture}
		\tikzstyle{every node}=[font=\small]

		\def\circleradius{4.0cm}
		\def\beltdistance{0.35cm}
		\def\innerdistance{1.1cm}
		\def\textraiseoffset{-1.5mm}

		\draw (0,0) node[circle, inner 
		sep=1.5pt, fill] {}
			circle [radius=\circleradius];
		\node[minimum height=0.4cm, inner 
		sep=0] at (0, \beltdistance) 
		{$\mathrm{NaR}$};
		\node[minimum width=0.5cm, minimum 
		height=0.5cm, inner sep=0] 
		at (0, -\beltdistance) {{\notsotiny 
		\textcolor{fraction}{$\T\T\T\T$}}};

		\foreach \angle/\outerlabel/\innerlabel 
		in {
			 94.5//\raisebox{\textraiseoffset}{$\T\T\T0
			 \rightarrow 
			 \textcolor{regime}{1\T0}1 
			 \rightarrow 
			 6\phantom{+}$},
			 99.0//\raisebox{\textraiseoffset}{$\T\T\T1
			 \rightarrow 
			 \textcolor{regime}{1\T0}0 
			 \rightarrow 
			 6\phantom{+}$},
			103.5//\raisebox{\textraiseoffset}{$\T\T0\T
			\rightarrow 
			\textcolor{regime}{1\T0}\T 
			\rightarrow 
			6\phantom{+}$},
			108.0//\raisebox{\textraiseoffset}{$\T\T00
			\rightarrow 
			\textcolor{regime}{1\T\T}1 
			\rightarrow 
			5\phantom{+}$},
			112.5//\raisebox{\textraiseoffset}{$\T\T01
			\rightarrow 
			\textcolor{regime}{1\T\T}0 
			\rightarrow 
			5\phantom{+}$},
			117.0//\raisebox{\textraiseoffset}{$\T\T1\T
			\rightarrow 
			\textcolor{regime}{1\T\T}\T 
			\rightarrow 
			5\phantom{+}$},
			121.5//\raisebox{\textraiseoffset}{$\T\T10
			\rightarrow 
			\textcolor{regime}{011}1 
			\rightarrow 
			4\phantom{+}$},
			126.0//\raisebox{\textraiseoffset}{$\T\T11
			\rightarrow 
			\textcolor{regime}{011}0 
			\rightarrow 
			4\phantom{+}$},
			130.5//\raisebox{\textraiseoffset}{$\T0\T\T
			\rightarrow 
			\textcolor{regime}{011}\T 
			\rightarrow 
			4\phantom{+}$},
			135.0//\raisebox{\textraiseoffset}{$\T0\T0
			\rightarrow 
			\textcolor{regime}{010}1 
			\rightarrow 
			3\phantom{+}$},
			139.5//\raisebox{\textraiseoffset}{$\T0\T1
			\rightarrow 
			\textcolor{regime}{010}0 
			\rightarrow 
			3\phantom{+}$},
			144.0//\raisebox{\textraiseoffset}{$\T00\T
			\rightarrow 
			\textcolor{regime}{010}\T 
			\rightarrow 
			3\phantom{+}$},
			148.5//\raisebox{\textraiseoffset}{$\T000
			\rightarrow 
			\textcolor{regime}{01\T}1 
			\rightarrow 
			2\phantom{+}$},
			153.0//\raisebox{\textraiseoffset}{$\T001
			\rightarrow 
			\textcolor{regime}{01\T}0 
			\rightarrow 
			2\phantom{+}$},
			157.5//\raisebox{\textraiseoffset}{$\T01\T
			\rightarrow 
			\textcolor{regime}{01\T}\T 
			\rightarrow 
			2\phantom{+}$},
			162.0//\raisebox{\textraiseoffset}{$\T010
			\rightarrow 
			\textcolor{regime}{001}1 
			\rightarrow 
			1\phantom{+}$},
			166.5//\raisebox{\textraiseoffset}{$\T011
			\rightarrow 
			\textcolor{regime}{001}0 
			\rightarrow 
			1\phantom{+}$},
			171.0//\raisebox{\textraiseoffset}{$\T1\T\T
			\rightarrow 
			\textcolor{regime}{001}\T 
			\rightarrow 
			1\phantom{+}$},
			175.5//\raisebox{\textraiseoffset}{$\T1\T0
			\rightarrow 
			\textcolor{regime}{000}1 
			\rightarrow 
			0\phantom{+}$},
			180.0/$-1$/\raisebox{\textraiseoffset}{$\T1\T1
			\rightarrow 
			\textcolor{regime}{000}0 
			\rightarrow 
			0\phantom{+}$},
			184.5//\raisebox{\textraiseoffset}{$\T10\T
			\rightarrow 
			\textcolor{regime}{000}\T 
			\rightarrow 
			0\phantom{+}$},
			189.0//\raisebox{\textraiseoffset}{$\T100
			\rightarrow 
			\textcolor{regime}{00\T}1 
			\rightarrow 1-$},
			193.5//\raisebox{\textraiseoffset}{$\T101
			\rightarrow 
			\textcolor{regime}{00\T}0 
			\rightarrow 1-$},
			198.0//\raisebox{\textraiseoffset}{$\T11\T
			\rightarrow 
			\textcolor{regime}{00\T}\T 
			\rightarrow 1-$},
			202.5//\raisebox{\textraiseoffset}{$\T110
			\rightarrow 
			\textcolor{regime}{0\T1}1 
			\rightarrow 2-$},
			207.0//\raisebox{\textraiseoffset}{$\T111
			\rightarrow 
			\textcolor{regime}{0\T1}0 
			\rightarrow 2-$},
			211.5//\raisebox{\textraiseoffset}{$0\T\T\T
			\rightarrow 
			\textcolor{regime}{0\T1}\T 
			\rightarrow 2-$},
			216.0//\raisebox{\textraiseoffset}{$0\T\T0
			\rightarrow 
			\textcolor{regime}{0\T0}1 
			\rightarrow 3-$},
			220.5//\raisebox{\textraiseoffset}{$0\T\T1
			\rightarrow 
			\textcolor{regime}{0\T0}0 
			\rightarrow 3-$},
			225.0//\raisebox{\textraiseoffset}{$0\T0\T
			\rightarrow 
			\textcolor{regime}{0\T0}\T 
			\rightarrow 3-$},
			229.5//\raisebox{\textraiseoffset}{$0\T00
			\rightarrow 
			\textcolor{regime}{0\T\T}1 
			\rightarrow 4-$},
			234.0//\raisebox{\textraiseoffset}{$0\T01
			\rightarrow 
			\textcolor{regime}{0\T\T}0 
			\rightarrow 4-$},
			238.5//\raisebox{\textraiseoffset}{$0\T1\T
			\rightarrow 
			\textcolor{regime}{0\T\T}\T 
			\rightarrow 
			4-$},
			243.0//\raisebox{\textraiseoffset}{$0\T10
			\rightarrow 
			\textcolor{regime}{\T11}1 
			\rightarrow 5-$},
			247.5//\raisebox{\textraiseoffset}{$0\T11
			\rightarrow 
			\textcolor{regime}{\T11}0 
			\rightarrow 5-$},
			252.0//\raisebox{\textraiseoffset}{$00\T\T
			\rightarrow 
			\textcolor{regime}{\T11}\T 
			\rightarrow 5-$},
			256.5//\raisebox{\textraiseoffset}{$00\T0
			\rightarrow 
			\textcolor{regime}{\T10}1 
			\rightarrow 6-$},
			261.0//\raisebox{\textraiseoffset}{$00\T1
			\rightarrow 
			\textcolor{regime}{\T10}0 
			\rightarrow 6-$},
			265.5//\raisebox{\textraiseoffset}{$000\T
			\rightarrow 
			\textcolor{regime}{\T10}\T 
			\rightarrow 6-$},
			270.0/$0$/\raisebox{\textraiseoffset}{$\phantom{\phantom{+}0
			 \leftarrow 0000 \leftarrow \ } 
			 0000$},
			274.5//\raisebox{\textraiseoffset}{$-6
			 \leftarrow 
			\textcolor{regime}{\T10}\T  
			\leftarrow 0001$},
			279.0//\raisebox{\textraiseoffset}{$-6
			 \leftarrow 
			\textcolor{regime}{\T10}0   
			\leftarrow 001\T$},
			283.5//\raisebox{\textraiseoffset}{$-6
			 \leftarrow 
			\textcolor{regime}{\T10}1   
			\leftarrow 0010$},
			288.0//\raisebox{\textraiseoffset}{$-5
			 \leftarrow 
			\textcolor{regime}{\T11}\T  
			\leftarrow 0011$},
			292.5//\raisebox{\textraiseoffset}{$-5
			 \leftarrow 
			\textcolor{regime}{\T11}0   
			\leftarrow 01\T\T$},
			297.0//\raisebox{\textraiseoffset}{$-5
			 \leftarrow 
			\textcolor{regime}{\T11}1   
			\leftarrow 01\T0$},
			301.5//\raisebox{\textraiseoffset}{$-4
			 \leftarrow 
			\textcolor{regime}{0\T\T}\T 
			\leftarrow 01\T1$},
			306.0//\raisebox{\textraiseoffset}{$-4
			 \leftarrow 
			\textcolor{regime}{0\T\T}0  
			\leftarrow 010\T$},
			310.5//\raisebox{\textraiseoffset}{$-4
			 \leftarrow 
			\textcolor{regime}{0\T\T}1  
			\leftarrow 0100$},
			315.0//\raisebox{\textraiseoffset}{$-3
			 \leftarrow 
			\textcolor{regime}{0\T0}\T  
			\leftarrow 0101$},
			319.5//\raisebox{\textraiseoffset}{$-3
			 \leftarrow 
			\textcolor{regime}{0\T0}0   
			\leftarrow 011\T$},
			324.0//\raisebox{\textraiseoffset}{$-3
			 \leftarrow 
			\textcolor{regime}{0\T0}1   
			\leftarrow 0110$},
			328.5//\raisebox{\textraiseoffset}{$-2
			 \leftarrow 
			\textcolor{regime}{0\T1}\T  
			\leftarrow 0111$},
			333.0//\raisebox{\textraiseoffset}{$-2
			 \leftarrow 
			\textcolor{regime}{0\T1}0   
			\leftarrow 1\T\T\T$},
			337.5//\raisebox{\textraiseoffset}{$-2
			 \leftarrow 
			\textcolor{regime}{0\T1}1   
			\leftarrow 1\T\T0$},
			342.0//\raisebox{\textraiseoffset}{$-1
			 \leftarrow 
			\textcolor{regime}{00\T}\T  
			\leftarrow 1\T\T1$},
			346.5//\raisebox{\textraiseoffset}{$-1
			 \leftarrow 
			\textcolor{regime}{00\T}0   
			\leftarrow 1\T0\T$},
			351.0//\raisebox{\textraiseoffset}{$-1
			 \leftarrow 
			\textcolor{regime}{00\T}1   
			\leftarrow 1\T00$},
			355.5//\raisebox{\textraiseoffset}{$\phantom{+}0
			\leftarrow 
			\textcolor{regime}{000}\T   
			\leftarrow 
			1\T01$},
			  0.0/$1$/\raisebox{\textraiseoffset}{$\phantom{+}0
			  \leftarrow 
			  \textcolor{regime}{000}0 
			  \leftarrow 
			  1\T1\T$},
			  4.5//\raisebox{\textraiseoffset}{$\phantom{+}0
			  \leftarrow 
			  \textcolor{regime}{000}1    
			  \leftarrow 
			  1\T10$},
			  9.0//\raisebox{\textraiseoffset}{$\phantom{+}1
			  \leftarrow 
			  \textcolor{regime}{001}\T   
			  \leftarrow 
			  1\T11$},
			 13.5//\raisebox{\textraiseoffset}{$\phantom{+}1
			 \leftarrow 
			 \textcolor{regime}{001}0    
			 \leftarrow 
			 10\T\T$},
			 18.0//\raisebox{\textraiseoffset}{$\phantom{+}1
			 \leftarrow 
			 \textcolor{regime}{001}1    
			 \leftarrow 
			 10\T0$},
			 22.5//\raisebox{\textraiseoffset}{$\phantom{+}2
			 \leftarrow 
			 \textcolor{regime}{01\T}\T  
			 \leftarrow 
			 10\T1$},
			 27.0//\raisebox{\textraiseoffset}{$\phantom{+}2
			 \leftarrow 
			 \textcolor{regime}{01\T}0   
			 \leftarrow 
			 100\T$},
			 31.5//\raisebox{\textraiseoffset}{$\phantom{+}2
			 \leftarrow 
			 \textcolor{regime}{01\T}1   
			 \leftarrow 
			 1000$},
			 36.0//\raisebox{\textraiseoffset}{$\phantom{+}3
			 \leftarrow 
			 \textcolor{regime}{010}\T   
			 \leftarrow 
			 1001$},
			 40.5//\raisebox{\textraiseoffset}{$\phantom{+}3
			 \leftarrow 
			 \textcolor{regime}{010}0    
			 \leftarrow 
			 101\T$},
			 45.0//\raisebox{\textraiseoffset}{$\phantom{+}3
			 \leftarrow 
			 \textcolor{regime}{010}1    
			 \leftarrow 
			 1010$},
			 49.5//\raisebox{\textraiseoffset}{$\phantom{+}4
			 \leftarrow 
			 \textcolor{regime}{011}\T   
			 \leftarrow 
			 1011$},
			 54.0//\raisebox{\textraiseoffset}{$\phantom{+}4
			 \leftarrow 
			 \textcolor{regime}{011}0    
			 \leftarrow 
			 11\T\T$},
			 58.5//\raisebox{\textraiseoffset}{$\phantom{+}4
			 \leftarrow 
			 \textcolor{regime}{011}1    
			 \leftarrow 
			 11\T0$},
			 63.0//\raisebox{\textraiseoffset}{$\phantom{+}5
			 \leftarrow 
			 \textcolor{regime}{1\T\T}\T 
			 \leftarrow 
			 11\T1$},
			 67.5//\raisebox{\textraiseoffset}{$\phantom{+}5
			 \leftarrow 
			 \textcolor{regime}{1\T\T}0  
			 \leftarrow 
			 110\T$},
			 72.0//\raisebox{\textraiseoffset}{$\phantom{+}5
			 \leftarrow 
			 \textcolor{regime}{1\T\T}1  
			 \leftarrow 
			 1100$},
			 76.5//\raisebox{\textraiseoffset}{$\phantom{+}6
			 \leftarrow 
			 \textcolor{regime}{1\T0}\T  
			 \leftarrow 
			 1101$},
			 81.0//\raisebox{\textraiseoffset}{$\phantom{+}6
			 \leftarrow 
			 \textcolor{regime}{1\T0}0   
			 \leftarrow 
			 111\T$},
			 85.5//\raisebox{\textraiseoffset}{$\phantom{+}6
			 \leftarrow 
			 \textcolor{regime}{1\T0}1   
			 \leftarrow 
			 1110$},
	 		 90.0/$\infty$/\raisebox{\textraiseoffset}{$\phantom{\phantom{+}0
	 		  \leftarrow 0000 \leftarrow \ 
	 		  } 1111$}%
		} {
			\ifthenelse{\lengthtest{\angle 
			pt > 90pt} \AND 
			\lengthtest{\angle pt < 270pt}}
				{\pgfmathsetmacro\rot{\angle
				 + 
				180}\pgfmathsetmacro\textrot{\angle
				 + 180}}
				{
					\ifthenelse{\lengthtest{\angle
					 pt = 
					90pt} \OR 
					\lengthtest{\angle
					 pt = 
					270pt}}
						{\pgfmathsetmacro\rot{\angle}\pgfmathsetmacro\textrot{0}}
						{\pgfmathsetmacro\rot{\angle}\pgfmathsetmacro\textrot{\angle}}
				}
			\ifthenelse{\lengthtest{\angle 
			pt = 0pt} \OR 
			\lengthtest{\angle pt = 90pt} 
			\OR
					\lengthtest{\angle
					 pt = 180pt} 
					\OR 
					\lengthtest{\angle
					 pt = 270pt}}
				{\pgfmathsetmacro\noderadius{1.5pt}}
				{\pgfmathsetmacro\noderadius{1.0pt}}
	
			\draw (\angle: \circleradius)
				node[circle, inner 
				sep=\noderadius, fill] 
				{};
	
			\node[rotate=\textrot, 
			anchor=center, minimum 
			width=0.4cm, minimum 
			height=0.4cm, inner sep=0] at
				(\angle: {\circleradius 
				+ \beltdistance}) 
				{\outerlabel};
			\node[rotate=\rot, 
			anchor=center, minimum 
			width=0.5cm, 
			minimum height=0.2cm, inner 
			sep=0] at
				(\angle: {\circleradius 
				- \innerdistance}) {
					{\notsotiny 
					\textcolor{fraction}{\innerlabel}}};
		}
	\end{tikzpicture}
	\caption{
		Mapping of ternary strings $\bm{t}$ of 
		length $4$ to the real 
		wheel algebra.
		The values of $\anchor_4(\bm{t})$ are 
		given, with the first
		three trits, designated as the regime 
		trits 
		$\textcolor{regime}{\bm{r}}$,
		highlighted accordingly. The 
		corresponding regime values $r$ 
		are also indicated, partially
		with suffixed signs for better visual 
		consistency.
	}
	\label{fig:wheel-discrete-4}
\end{figure}
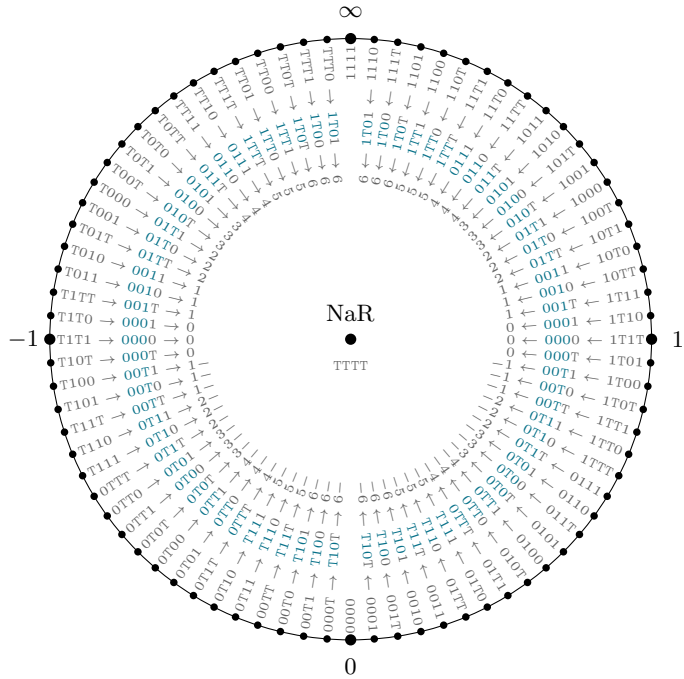%
An illustration of the proposed mapping for $n=4$ is provided in
Figure~\ref{fig:wheel-discrete-4}. At this point, we can confidently conclude that we have found a way to
make the tool, namely the takum approach, fit again, overcoming the second filter in the process.
\subsection{Filter 3: Excess}
\begin{table}[p]
	\caption{Overview of mappings from regime values to exponent trit counts.}
	\label{tab:regime_mappings}
	\scriptsize
	\begin{center}
		\bgroup
		\def\arraystretch{1.15}
		\setlength{\tabcolsep}{0.1em}
		\begin{tabular}{|l||l|l|l|l|l|l|l|l|l|}
		\cline{2-10}
		\multicolumn{1}{c|}{}                                         & $r$                                & $0$                                   & $1$                                  & $2$                                   & $3$                                  & $4$                                  & $5$                                    & $6$                                   & $7$                                      \\ \hhline{~|=|=|=|=|=|=|=|=|=|} 
		\multicolumn{1}{c|}{}                                         & \multirow{2}{*}{$\anchor(\bm{t})$} & $\textcolor{regime}{000}\T\cdotsx\T$  & $\textcolor{regime}{001}\T\cdotsx\T$ & $\textcolor{regime}{01\T}\T\cdotsx\T$ & $\textcolor{regime}{010}\T\cdotsx\T$ & $\textcolor{regime}{011}\T\cdotsx\T$ & $\textcolor{regime}{1\T\T}\T\cdotsx\T$ & $\textcolor{regime}{1\T0}\T\cdotsx\T$ & $\textcolor{regime}{1\T1}\T\cdotsx\T$    \\ \hhline{-|~--------|} 
		c(r)                                                          &                                    & $\textcolor{regime}{000}1\cdotsx1$    & $\textcolor{regime}{001}1\cdotsx1$   & $\textcolor{regime}{01\T}1\cdotsx1$   & $\textcolor{regime}{010}1\cdotsx1$   & $\textcolor{regime}{011}1\cdotsx1$   & $\textcolor{regime}{1\T\T}1\cdotsx1$   & $\textcolor{regime}{1\T0}1\cdotsx1$   & $\textcolor{regime}{1\T1}\T\cdotsx1\T01$ \\ \hhline{|=||=|=|=|=|=|=|=|=|=|} 
		\multirow{7}{*}{$|r|$}                                        & $c$                                & $0$                                   & $1$                                  & $2$                                   & $3$                                  & $4$                                  & $5$                                    & $6$                                   & $7$                                      \\ \cline{2-10} 
		                                                              & $\bm{e}_\text{min}$                &                                       & $\T$                                 & $\T\T$                                & $\T\T\T$                             & $\T\T\T\T$                           & $\T\T\T\T\T$                           & $\T\T\T\T\T\T$                        & $\T\T\T\T\T\T\T$                         \\ \cline{2-10} 
		                                                              & $\bm{e}_\text{max}$                &                                       & $1$                                  & $11$                                  & $111$                                & $1111$                               & $11111$                                & $111111$                              & $\T1\T1\T1\T$                            \\ \cline{2-10} 
		                                                              & $\integer(\bm{e})$                 & $0\ldotsx0$                           & $-1\ldotsx1$                         & $-4\ldotsx4$                          & $-13\ldotsx13$                       & $-40\ldotsx40$                       & $-121\ldotsx121$                       & $-364\ldotsx364$                      & $-1093\ldotsx{-547}$                     \\ \cline{2-10} 
		                                                              & $b$                                & $0$                                   & $2$                                  & $8$                                   & $26$                                 & $80$                                 & $242$                                  & $728$                                 & $2186$                                   \\ \cline{2-10} 
		                                                              & $e$                                & $0\ldotsx0$                           & $1\ldotsx3$                          & $4\ldotsx12$                          & $13\ldotsx39$                        & $40\ldotsx120$                       & $121\ldotsx363$                        & $364\ldotsx1092$                      & $1093\ldotsx1639$                        \\ \cline{2-10} 
		                                                              & $\lg(3^{e})$                       & $0\ldotsx0$                           & $0.5\ldotsx1.4$                      & $1.9\ldotsx5.7$                       & $6.2\ldotsx19$                       & $19\ldotsx57$                        & $58\ldotsx173$                         & $174\ldotsx521$                       & $521\ldotsx782$                          \\ \hhline{|=||=|=|=|=|=|=|=|=|=|} 
		\multirow{7}{*}{$\begin{matrix}\max(0,\\|r|-1)\end{matrix}$}  & $c$                                & $0$                                   & $0$                                  & $1$                                   & $2$                                  & $3$                                  & $4$                                    & $5$                                   & $6$                                      \\ \cline{2-10} 
		                                                              & $\bm{e}_\text{min}$                &                                       &                                      & $\T$                                  & $\T\T$                               & $\T\T\T$                             & $\T\T\T\T$                             & $\T\T\T\T\T$                          & $\T\T\T\T\T\T$                           \\ \cline{2-10} 
		                                                              & $\bm{e}_\text{max}$                &                                       &                                      & $1$                                   & $11$                                 & $111$                                & $1111$                                 & $11111$                               & $\T1\T1\T1$                              \\ \cline{2-10} 
		                                                              & $\integer(\bm{e})$                 & $0\ldotsx0$                           & $0\ldotsx0$                          & $-1\ldotsx1$                          & $-4\ldotsx4$                         & $-13\ldotsx13$                       & $-40\ldotsx40$                         & $-121\ldotsx121$                      & $-364\ldotsx{-182}$                      \\ \cline{2-10} 
		                                                              & $b$                                & $0$                                   & $1$                                  & $3$                                   & $9$                                  & $27$                                 & $81$                                   & $243$                                 & $729$                                    \\ \cline{2-10} 
		                                                              & $e$                                & $0\ldotsx0$                           & $1\ldotsx1$                          & $2\ldotsx4$                           & $5\ldotsx13$                         & $14\ldotsx40$                        & $41\ldotsx121$                         & $122\ldotsx364$                       & $365\ldotsx547$                          \\ \cline{2-10} 
		                                                              & $\lg(3^{e})$                       & $0\ldotsx0$                           & $0.5\ldotsx0.5$                      & $1.0\ldotsx1.9$                       & $2.4\ldotsx6.2$                      & $6.7\ldotsx19$                       & $20\ldotsx58$                          & $58\ldotsx174$                        & $174\ldotsx261$                          \\ \hhline{|=||=|=|=|=|=|=|=|=|=|} 
		\multirow{7}{*}{alpha}                                        & $c$                                & $0$                                   & $0$                                  & $1$                                   & $1$                                  & $2$                                  & $3$                                    & $4$                                   & $5$                                      \\ \cline{2-10} 
		                                                              & $\bm{e}_\text{min}$                &                                       &                                      & $\T$                                  & $\T$                                 & $\T\T$                               & $\T\T\T$                               & $\T\T\T\T$                            & $\T\T\T\T\T$                             \\ \cline{2-10} 
		                                                              & $\bm{e}_\text{max}$                &                                       &                                      & $1$                                   & $1$                                  & $11$                                 & $111$                                  & $1111$                                & $\T1\T1\T$                               \\ \cline{2-10} 
		                                                              & $\integer(\bm{e})$                 & $0\ldotsx0$                           & $0\ldotsx0$                          & $-1\ldotsx1$                          & $-1\ldotsx1$                         & $-4\ldotsx4$                         & $-13\ldotsx13$                         & $-40\ldotsx40$                        & $-121\ldotsx{-61}$                       \\ \cline{2-10} 
		                                                              & $b$                                & $0$                                   & $1$                                  & $3$                                   & $6$                                  & $12$                                 & $30$                                   & $84$                                  & $246$                                    \\ \cline{2-10} 
		                                                              & $e$                                & $0\ldotsx0$                           & $1\ldotsx1$                          & $2\ldotsx4$                           & $5\ldotsx7$                          & $8\ldotsx16$                         & $17\ldotsx43$                          & $44\ldotsx124$                        & $125\ldotsx185$                          \\ \cline{2-10} 
		                                                              & $\lg(3^{e})$                       & $0\ldotsx0$                           & $0.5\ldotsx0.5$                      & $1.0\ldotsx1.9$                       & $2.4\ldotsx3.3$                      & $3.8\ldotsx7.6$                      & $8.1\ldotsx21$                         & $21\ldotsx59$                         & $60\ldotsx88$                            \\ \hhline{|=||=|=|=|=|=|=|=|=|=|} 
		\multirow{7}{*}{$\begin{matrix}\max(0,\\|r|-2)\end{matrix}$}  & $c$                                & $0$                                   & $0$                                  & $0$                                   & $1$                                  & $2$                                  & $3$                                    & $4$                                   & $5$                                      \\ \cline{2-10} 
		                                                              & $\bm{e}_\text{min}$                &                                       &                                      &                                       & $\T$                                 & $\T\T$                               & $\T\T\T$                               & $\T\T\T\T$                            & $\T\T\T\T\T$                             \\ \cline{2-10} 
		                                                              & $\bm{e}_\text{max}$                &                                       &                                      &                                       & $1$                                  & $11$                                 & $111$                                  & $1111$                                & $\T1\T1\T$                               \\ \cline{2-10} 
		                                                              & $\integer(\bm{e})$                 & $0\ldotsx0$                           & $0\ldotsx0$                          & $0\ldotsx0$                           & $-1\ldotsx1$                         & $-4\ldotsx4$                         & $-13\ldotsx13$                         & $-40\ldotsx40$                        & $-121\ldotsx{-61}$                       \\ \cline{2-10} 
		                                                              & $b$                                & $0$                                   & $1$                                  & $2$                                   & $4$                                  & $10$                                 & $28$                                   & $82$                                  & $244$                                    \\ \cline{2-10} 
		                                                              & $e$                                & $0\ldotsx0$                           & $1\ldotsx1$                          & $2\ldotsx2$                           & $3\ldotsx5$                          & $6\ldotsx14$                         & $15\ldotsx41$                          & $42\ldotsx122$                        & $123\ldotsx183$                          \\ \cline{2-10} 
		                                                              & $\lg(3^{e})$                       & $0\ldotsx0$                           & $0.5\ldotsx0.5$                      & $1.0\ldotsx1.0$                       & $1.4\ldotsx2.4$                      & $2.9\ldotsx6.7$                      & $7.1\ldotsx20$                         & $20\ldotsx58$                         & $59\ldotsx87$                            \\ \hhline{|=||=|=|=|=|=|=|=|=|=|} 
		\multirow{7}{*}{beta}                                         & $c$                                & $0$                                   & $0$                                  & $0$                                   & $1$                                  & $1$                                  & $2$                                    & $3$                                   & $4$                                      \\ \cline{2-10} 
		                                                              & $\bm{e}_\text{min}$                &                                       &                                      &                                       & $\T$                                 & $\T$                                 & $\T\T$                                 & $\T\T\T$                              & $\T\T\T\T$                               \\ \cline{2-10} 
		                                                              & $\bm{e}_\text{max}$                &                                       &                                      &                                       & $1$                                  & $1$                                  & $11$                                   & $111$                                 & $\T1\T1$                                 \\ \cline{2-10} 
		                                                              & $\integer(\bm{e})$                 & $0\ldotsx0$                           & $0\ldotsx0$                          & $0\ldotsx0$                           & $-1\ldotsx1$                         & $-1\ldotsx1$                         & $-4\ldotsx4$                           & $-13\ldotsx13$                        & $-40\ldotsx{-20}$                        \\ \cline{2-10} 
		                                                              & $b$                                & $0$                                   & $1$                                  & $2$                                   & $4$                                  & $7$                                  & $13$                                   & $31$                                  & $85$                                     \\ \cline{2-10} 
		                                                              & $e$                                & $0\ldotsx0$                           & $1\ldotsx1$                          & $2\ldotsx2$                           & $3\ldotsx5$                          & $6\ldotsx8$                          & $9\ldotsx17$                           & $18\ldotsx44$                         & $45\ldotsx65$                            \\ \cline{2-10} 
		                                                              & $\lg(3^{e})$                       & $0\ldotsx0$                           & $0.5\ldotsx0.5$                      & $1.0\ldotsx1.0$                       & $1.4\ldotsx2.4$                      & $2.9\ldotsx3.8$                      & $4.3\ldotsx8.1$                        & $8.6\ldotsx21$                        & $21\ldotsx31$                            \\ \hhline{|=||=|=|=|=|=|=|=|=|=|} 
		\multirow{7}{*}{$\begin{matrix}\max(0,\\|r|-3)\end{matrix}$}  & $c$                                & $0$                                   & $0$                                  & $0$                                   & $0$                                  & $1$                                  & $2$                                    & $3$                                   & $4$                                      \\ \cline{2-10} 
		                                                              & $\bm{e}_\text{min}$                &                                       &                                      &                                       &                                      & $\T$                                 & $\T\T$                                 & $\T\T\T$                              & $\T\T\T\T$                               \\ \cline{2-10} 
		                                                              & $\bm{e}_\text{max}$                &                                       &                                      &                                       &                                      & $1$                                  & $11$                                   & $111$                                 & $\T1\T1$                                 \\ \cline{2-10} 
		                                                              & $\integer(\bm{e})$                 & $0\ldotsx0$                           & $0\ldotsx0$                          & $0\ldotsx0$                           & $0\ldotsx0$                          & $-1\ldotsx1$                         & $-4\ldotsx4$                           & $-13\ldotsx13$                        & $-40\ldotsx{-20}$                        \\ \cline{2-10} 
		                                                              & $b$                                & $0$                                   & $1$                                  & $2$                                   & $3$                                  & $5$                                  & $11$                                   & $29$                                  & $83$                                     \\ \cline{2-10} 
		                                                              & $e$                                & $0\ldotsx0$                           & $1\ldotsx1$                          & $2\ldotsx2$                           & $3\ldotsx3$                          & $4\ldotsx6$                          & $7\ldotsx15$                           & $16\ldotsx42$                         & $43\ldotsx63$                            \\ \cline{2-10} 
		                                                              & $\lg(3^{e})$                       & $0\ldotsx0$                           & $0.5\ldotsx0.5$                      & $1.0\ldotsx1.0$                       & $1.4\ldotsx1.4$                      & $1.9\ldotsx2.9$                      & $3.3\ldotsx7.2$                        & $7.6\ldotsx20$                        & $21\ldotsx30$                            \\ \hhline{|=||=|=|=|=|=|=|=|=|=|} 
		\multirow{7}{*}{gamma}                                        & $c$                                & $0$                                   & $0$                                  & $0$                                   & $0$                                  & $1$                                  & $1$                                    & $2$                                   & $3$                                      \\ \cline{2-10} 
		                                                              & $\bm{e}_\text{min}$                &                                       &                                      &                                       &                                      & $\T$                                 & $\T$                                   & $\T\T$                                & $\T\T\T$                                 \\ \cline{2-10} 
		                                                              & $\bm{e}_\text{max}$                &                                       &                                      &                                       &                                      & $1$                                  & $1$                                    & $11$                                  & $\T1\T$                                  \\ \cline{2-10} 
		                                                              & $\integer(\bm{e})$                 & $0\ldotsx0$                           & $0\ldotsx0$                          & $0\ldotsx0$                           & $0\ldotsx0$                          & $-1\ldotsx1$                         & $-1\ldotsx1$                           & $-4\ldotsx4$                          & $-13\ldotsx{-7}$                         \\ \cline{2-10} 
		                                                              & $b$                                & $0$                                   & $1$                                  & $2$                                   & $3$                                  & $5$                                  & $8$                                    & $14$                                  & $32$                                     \\ \cline{2-10} 
		                                                              & $e$                                & $0\ldotsx0$                           & $1\ldotsx1$                          & $2\ldotsx2$                           & $3\ldotsx3$                          & $4\ldotsx6$                          & $7\ldotsx9$                            & $10\ldotsx18$                         & $19\ldotsx25$                            \\ \cline{2-10} 
		                                                              & $\lg(3^{e})$                       & $0\ldotsx0$                           & $0.5\ldotsx0.5$                      & $1.0\ldotsx1.0$                       & $1.4\ldotsx1.4$                      & $1.9\ldotsx2.9$                      & $3.3\ldotsx4.3$                        & $4.8\ldotsx8.6$                       & $9.1\ldotsx12$                           \\ \cline{1-10} 

		\end{tabular}
		\egroup
	\end{center}
\end{table}
Having set the number of regime trits to three, after excluding the alternative choices
of two and four, the next stage is to actually investigate the format we obtain
from this approach. As mentioned before, a straightforward approach is to take the
modulus of the regime value $r \in \{-7,\dots,7\}$ and use it as the count $c$ of
exponent trits $\bm{e}$ following after the regime bits. With a careful selection of biases
$b$ added to the value $\integer(\bm{e})$ represented by the exponent trits we obtain a
consecutive sequence of exponent values $e$. Please refer to the first row block
of Table~\ref{tab:regime_mappings} to observe the derivation of biases and outcome
in terms of exponent values for each regime.
\par
We notice a big issue: This approach yields an obviously excessive dynamic range of
$10^{\pm 782}$. But what is a good general purpose dynamic range? The first work setting a
rough bound is by \textsc{Quevedo}, who remarks \enquote{No he señalado límite al valor del
exponente; pero es evidente que en todos los cálculos usuales será menor de ciento}
(I have not set a limit on the value of the [base-10] exponent, but it is clear that in all usual
calculations it will be less than one hundred; translation by the author) \cite[582\psq]{quevedo-1914}.
A more thorough investigation in \cite[Section~1.2]{2024-takum} makes the case
for a dynamic range of $10^{\pm55}$. While one can go into the multitude of
application-specific arithmetic, for which there is almost no lower bound on
dynamic range, the author would like to make the case that if someone builds
general-purpose, dedicated hardware, the arithmetic shall be as well.
\par
To tame the excessive dynamic range yielded by choosing $c(r) = |r|$, we consider the
parametric liberties we have in this case. The mapping $c(r)$ shall be non-decreasing and
start at zero. An additional justifiable constraint is that consecutive counts
should only differ by zero or one. These constraints limit the choices of $c(r)$,
and the first six are given in Table~\ref{tab:regime_mappings}: Three mappings
are straightforward shifts, where we only \enquote{start counting} after a certain
threshold is exceeded. Until then the regimes only 
represent a single exponent 
value.
Given are also three more special mappings called \enquote{alpha}, \enquote{beta} and
\enquote{gamma}, where after the skip the counting is not linear, but actually 
tapered itself by repeating the count one for two times.
\par
Overall we can see, in terms of dynamic ranges, a wide selection, going all the way down
to $10^{\pm 12}$. Given we explored all possible, reasonable choices of $c(r)$ within
this range we can say with confidence that this parametric space has been
exhausted. While $\max(0, |r| - 1)$ still has an excessive dynamic range of
$10^{\pm 261}$, the choices beta, $\max(0, |r| - 3)$ and gamma have insufficient
general purpose dynamic ranges $10^{\pm 31}$, $10^{\pm 30}$ and $10^{\pm 12}$ respectively.
Only remaining as candidates are alpha and $\max(0, |r| - 2)$ with similar 
dynamic
ranges $10^{\pm88}$ and $10^{\pm87}$ respectively, both satisfying both the 
\textsc{Quevedo}
limit of $10^{100}$ and the lower bound $10^{55}$ derived in \cite{2024-takum}.
The final choice falls on $\max(0, |r| - 2)$, as it has higher precision for
smaller values, which should always take precedence over the rarer \enquote{outer} numbers.
\par
%TODO while dynamic range more excess than 10^55, this can encompass more
%TODO stuff and provide even more headroom. While we couldn't get closer despite exhausting
%TODO the parametric choices, it's given due to the fixedness of the range that one should err
%TODO on the side of more rather than less dynamic range, and it's still less than 100
%TODO making Quevedo happy :P
\section{Tekum Definition}\label{sec:tekum_definition}
In this section, we introduce the tekum format, building upon the observations 
made in the previous section. The term \enquote{tekum} is derived from a 
combination of \enquote{ternary} and \enquote{takum}, reflecting the two 
foundational pillars of its design. First, the format is based on a balanced 
ternary representation. Second, it follows the design principles of takum, most 
notably its limited dynamic range. The name \enquote{takum} itself originates 
from the Icelandic phrase \enquote{takmarkað umfang}, which translates to 
\enquote{limited range}. On this basis, we obtain the following format:
\begin{definition}[tekum encoding]\label{def:tekum}
Let $n \in 2\mathbb{N}_1$ with $n \ge 8$. Any
$\bm{t} \in \Tset_n$ with 
$\textcolor{regime}{\bm{r}} \mdoubleplus
\textcolor{exponent}{\bm{e}} \mdoubleplus \textcolor{fraction}{\bm{f}} := 
\anchor_n(\bm{t})$ of the form
\begin{center}
	\begin{tikzpicture}
		\draw[<->] (0.0, 0.7) -- (2.8, 0.7) node[above,pos=.5] {exponent};
		\draw[<->] (2.8, 0.7) -- (10.0, 0.7) node[above,pos=.5] {fraction};

		\draw (0.0,0  ) rectangle (1.2,0.5) node[pos=.5] 
		{$\textcolor{regime}{\bm{r}}$};
		\draw (1.2,0  ) rectangle (2.8,0.5) node[pos=.5] 
		{$\textcolor{exponent}{\bm{e}}$};
		\draw (2.8,0  ) rectangle (10.0,0.5) node[pos=.5] 
		{$\textcolor{fraction}{\bm{f}}$};

		\draw[<->] (0.0, -0.2) -- (1.2, -0.2) node[below,pos=.5] {$3$};
		\draw[<->] (1.2, -0.2) -- (2.8, -0.2) node[below,pos=.5] {$c$};
		\draw[<->] (2.8, -0.2) -- (10.0, -0.2) node[below,pos=.5] {$p$};
	\end{tikzpicture}
\end{center}
with \emph{regime trits} $\textcolor{regime}{\bm{r}}$, \emph{exponent trits}
$\textcolor{exponent}{\bm{e}}$, \emph{fraction trits} 
$\textcolor{fraction}{\bm{f}}$, and
\begin{align}
	s &:= \sign(\integer_n(\bm{t})) &\colon \parbox{2.9cm}{sign}\\
	r &:= \integer_3(\textcolor{regime}{\bm{r}})
		\in \{-7,\dots,7\}
		&\colon \parbox{2.9cm}{regime value}\\
	c &:= \max(0, |r| - 2) \in \{0,\dots,5\} &\colon \parbox{2.9cm}{exponent trit count}\\
	p &:= n - c - 3 \in \{ n-8,\dots,n-3 \} &\colon
		\parbox{2.9cm}{fraction trit count}\\
	b &:= \sign(r) \cdot \left\lfloor 3^{|r|-2} + 1 \right\rfloor\notag\\ &\phantom{:}=
		\sign(r) \cdot {(0,1,2,4,10,28,82,244)}_{|r|} &\colon \parbox{2.9cm}{exponent bias}\\
	e &:= \integer_c(\textcolor{exponent}{\bm{e}})+ b \in \{-183,\dots,183\}
		&\colon \parbox{2.9cm}{exponent value}\\
	f &:= 3^{-p} \integer_p(\textcolor{fraction}{\bm{f}}) \in (-0.5,0.5)
		&\colon \parbox{2.9cm}{fraction value}
\end{align}
encodes the tekum value
\begin{equation}\label{eq:tekum}
	\tekum_n(\bm{t})
	:= \begin{cases}
		\mathrm{NaR}
			& \textcolor{regime}{\bm{r}} \mdoubleplus
			\textcolor{exponent}{\bm{e}} \mdoubleplus 
			\textcolor{fraction}{\bm{f}} = \T\cdots\T\\
		0
			& \textcolor{regime}{\bm{r}} \mdoubleplus
			\textcolor{exponent}{\bm{e}} \mdoubleplus 
			\textcolor{fraction}{\bm{f}} = 0\cdots 0\\
		\infty
			& \textcolor{regime}{\bm{r}} \mdoubleplus
			\textcolor{exponent}{\bm{e}} \mdoubleplus 
			\textcolor{fraction}{\bm{f}} = 1\cdots 1\\
		s \cdot (1+f) \cdot 3^{e} & \text{otherwise}
	\end{cases}
\end{equation}
with $\tekum \colon \Tset_n \mapsto \{ \mathrm{NaR}, 0, \infty \} \cup
\pm\left(0.5 \cdot 3^{-183},1.5\cdot 3^{183}\right)$.
Without loss of generality, any trit string shorter than 8 trits with an
even, positive number of trits is also included in the definition by matching the special
cases ($\mathrm{NaR}$, $0$, and $\infty$) respectively and expanding 
$\textcolor{regime}{\bm{r}} \mdoubleplus
\textcolor{exponent}{\bm{e}} \mdoubleplus 
\textcolor{fraction}{\bm{f}}$ with zeros in non-special cases. The case $n=1$, 
which only covers the special cases, is also
trivially included.
By convention, $\mathrm{NaR}$ and $\infty$ are 
defined to be 
smaller and larger than any other represented 
value, respectively.
\end{definition}
We also introduce a tekum colour scheme, prioritising uniformity in 
both lightness and chroma within the perceptually uniform OKLCH colour
space \cite{2023-colour}. Detailed colour definitions are delineated in 
Table~\ref{tab:colour_scheme}.
\begin{table}[tbp]
	\caption{Overview of the tekum arithmetic colour scheme.}
	\label{tab:colour_scheme}
	\small
	\begin{center}
		\bgroup
		\def\arraystretch{1.2}
		\setlength{\tabcolsep}{0.18em}
		\begin{tabular}{| l || l || l | l | l |}
			\hline
			colour & identifier & OKLCH & CIELab & HEX (sRGB)\\
			\hline\hline
			%			\cellcolor{sign} & sign & $(50\%, 0.17, 25)$ &
			%				$(40.28, 53.78, 33.31)$ & \texttt{\#B02A2D}\\
			%			\hline
			%			\cellcolor{direction} & direction & $(50\%, 0.17, 
			%142.5)$ & 	
			%				$(43.91, -45.71, 46.68)$ & \texttt{\#007900}\\
			%			\hline
			\cellcolor{regime} & regime & $(50\%, 0.09, 220)$ &
			$(42.44, -20.42, -21.18)$ & \texttt{\#046F87}\\
			\hline
			\cellcolor{exponent} & exponent & $(50\%, 0.09, 335)$ &
			$(40.72, 27.23, -13.9)$ & \texttt{\#834F78}\\
			\hline
			\cellcolor{fraction} & fraction & $(50\%, 0.00, 0)$ &
			$(42.00, 0.00, 0.00)$ & \texttt{\#636363}\\
			\hline
		\end{tabular}
		\egroup
	\end{center}
\end{table}
\par
Just as with the posit and takum formats, the tekum format can also be 
expressed as a 
logarithmic number system by renaming the fraction trits as \emph{mantissa 
trits}, denoted by 
\textcolor{fraction}{$\bm{m}$}, and correspondingly the fraction value as $m$, 
the 
\emph{mantissa value}. In this case, the fourth expression in \eqref{eq:tekum} 
becomes $s \cdot 3^{e+m}$. Since the transition from binary to ternary is 
already a sufficiently radical step, we refrain from introducing a different 
base for the logarithmic format and retain base~3. In this way, both the linear 
and logarithmic tekum formats share the same overall numerical properties, with 
the logarithmic form serving merely as an implementation detail. Within the 
scope of this work we do not pursue logarithmic tekums further, but simply note 
the possibility.
\par
Particular attention may be drawn to the floating-point representation 
$(1+f)\cdot 3^e$. 
Although it may appear counterintuitive, it is indeed correct. This becomes 
evident when 
considering its value range, $(0.5\cdot 3^e,\, 1.5\cdot 3^e)$. Notably, the 
upper bound $1.5 
\cdot 3^e$ is equal to $0.5 \cdot 3^{e+1}$, thereby connecting seamlessly to 
the adjacent 
range $(0.5\cdot 3^{e+1},\, 1.5\cdot 3^{e+1})$.
\par
\begin{table}[p]
	\caption{Tekum decoding table for $n=4$, pruned to positive numbers.}
	\label{tab:example}
	\small
	\begin{center}
		\bgroup
		\def\arraystretch{1.04}
		\setlength{\tabcolsep}{0.18em}
		\begin{tabular}{|l|r|r|l|l|r|r|l|r|r|r|l|r|r|r|}
		\hline
		$\bm{t}$             & $\integer_4(\bm{t})$ & $s$ & 
		$\anchor_4(\bm{t})$ & 
		$\textcolor{regime}{\bm{r}}$   & $r$  & $o$ & 
		$\textcolor{exponent}{\bm{e}}$   & $b$   & $e$   & $p$ & 
		$\textcolor{fraction}{\bm{f}}$ & $f$    & $1+f$  & 
		$\tekum_4(\bm{t})$ \\ 
		\hline\hline
		$000\T$   & $-1$     & $-1$ & 
		$\textcolor{regime}{\T10}\textcolor{exponent}{\T}$           & 
		$\T10$  & $-6$ 
		& $4$ & $\T000$ & $-82$ & $-109$ & $0$ &            & 
		$0.0$             & 
		$1.0$            & \num{-9.9e-53} \\ \hline
		$0000$    & $0$      &      
		&                                                              
		&         
		&      &     &         &       &        &     &            
		&                   &                  & $0$            \\ 
		\hline
		$0001$    & $1$      & $1$  & 
		$\textcolor{regime}{\T10}\textcolor{exponent}{\T}$           & 
		$\T10$  & $-6$ 
		& $4$ & $\T000$ & $-82$ & $-109$ & $0$ &            & 
		$0.0$             & 
		$1.0$            & \num{9.9e-53}  \\ \hline
		$001\T$   & $2$      & $1$  & 
		$\textcolor{regime}{\T10}\textcolor{exponent}{0}$            & 
		$\T10$  & $-6$ 
		& $4$ & $0000$  & $-82$ & $-82$  & $0$ &            & 
		$0.0$             & 
		$1.0$            & \num{7.5e-40}  \\ \hline
		$0010$    & $3$      & $1$  & 
		$\textcolor{regime}{\T10}\textcolor{exponent}{1}$            & 
		$\T10$  & $-6$ 
		& $4$ & $1000$  & $-82$ & $-55$  & $0$ &            & 
		$0.0$             & 
		$1.0$            & \num{5.7e-27}  \\ \hline
		$0011$    & $4$      & $1$  & 
		$\textcolor{regime}{\T11}\textcolor{exponent}{\T}$           & 
		$\T11$  & $-5$ 
		& $3$ & $\T00$  & $-28$ & $-37$  & $0$ &            & 
		$0.0$             & 
		$1.0$            & \num{2.2e-18}  \\ \hline
		$01\T\T$  & $5$      & $1$  & 
		$\textcolor{regime}{\T11}\textcolor{exponent}{0}$            & 
		$\T11$  & $-5$ 
		& $3$ & $000$   & $-28$ & $-28$  & $0$ &            & 
		$0.0$             & 
		$1.0$            & \num{4.4e-14}  \\ \hline
		$01\T0$   & $6$      & $1$  & 
		$\textcolor{regime}{\T11}\textcolor{exponent}{1}$            & 
		$\T11$  & $-5$ 
		& $3$ & $100$   & $-28$ & $-19$  & $0$ &            & 
		$0.0$             & 
		$1.0$            & \num{8.6e-10}  \\ \hline
		$01\T1$   & $7$      & $1$  & 
		$\textcolor{regime}{0\T\T}\textcolor{exponent}{\T}$          & 
		$0\T\T$ & $-4$ 
		& $2$ & $\T0$   & $-10$ & $-13$  & $0$ &            & 
		$0.0$             & 
		$1.0$            & \num{6.3e-7}   \\ \hline
		$010\T$   & $8$      & $1$  & 
		$\textcolor{regime}{0\T\T}\textcolor{exponent}{0}$           & 
		$0\T\T$ & $-4$ 
		& $2$ & $00$    & $-10$ & $-10$  & $0$ &            & 
		$0.0$             & 
		$1.0$            & \num{1.7e-5}   \\ \hline
		$0100$    & $9$      & $1$  & 
		$\textcolor{regime}{0\T\T}\textcolor{exponent}{1}$           & 
		$0\T\T$ & $-4$ 
		& $2$ & $10$    & $-10$ & $-7$   & $0$ &            & 
		$0.0$             & 
		$1.0$            & \num{4.6e-4}   \\ \hline
		$0101$    & $10$     & $1$  & 
		$\textcolor{regime}{0\T0}\textcolor{exponent}{\T}$           & 
		$0\T0$  & $-3$ 
		& $1$ & $\T$    & $-4$  & $-5$   & $0$ &            & 
		$0.0$             & 
		$1.0$            & \num{4.1e-3}   \\ \hline
		$011\T$   & $11$     & $1$  & 
		$\textcolor{regime}{0\T0}\textcolor{exponent}{0}$            & 
		$0\T0$  & $-3$ 
		& $1$ & $0$     & $-4$  & $-4$   & $0$ &            & 
		$0.0$             & 
		$1.0$            & \num{1.2e-2}   \\ \hline
		$0110$    & $12$     & $1$  & 
		$\textcolor{regime}{0\T0}\textcolor{exponent}{1}$            & 
		$0\T0$  & $-3$ 
		& $1$ & $1$     & $-4$  & $-3$   & $0$ &            & 
		$0.0$             & 
		$1.0$            & \num{3.7e-2}   \\ \hline
		$0111$    & $13$     & $1$  & 
		$\textcolor{regime}{0\T1}\textcolor{fraction}{\T}$           & 
		$0\T1$  & $-2$ 
		& $0$ &         & $-2$  & $-2$   & $1$ & $\T$       & 
		$-0.\overline{3}$ & 
		$0.\overline{7}$ & \num{7.4e-2}   \\ \hline
		$1\T\T\T$ & $14$     & $1$  & 
		$\textcolor{regime}{0\T1}\textcolor{fraction}{0}$            & 
		$0\T1$  & $-2$ 
		& $0$ &         & $-2$  & $-2$   & $1$ & $0$        & 
		$0.0$             & 
		$1.0$            & \num{1.1e-1}   \\ \hline
		$1\T\T0$  & $15$     & $1$  & 
		$\textcolor{regime}{0\T1}\textcolor{fraction}{1}$            & 
		$0\T1$  & $-2$ 
		& $0$ &         & $-2$  & $-2$   & $1$ & $1$        & 
		$0.\overline{3}$  & 
		$1.\overline{3}$ & \num{1.5e-1}   \\ \hline
		$1\T\T1$  & $16$     & $1$  & 
		$\textcolor{regime}{00\T}\textcolor{fraction}{\T}$           & 
		$00\T$  & $-1$ 
		& $0$ &         & $-1$  & $-1$   & $1$ & $\T$       & 
		$-0.\overline{3}$ & 
		$0.\overline{7}$ & \num{2.2e-1}   \\ \hline
		$1\T0\T$  & $17$     & $1$  & 
		$\textcolor{regime}{00\T}\textcolor{fraction}{0}$            & 
		$00\T$  & $-1$ 
		& $0$ &         & $-1$  & $-1$   & $1$ & $0$        & 
		$0.0$             & 
		$1.0$            & \num{3.3e-1}   \\ \hline
		$1\T00$   & $18$     & $1$  & 
		$\textcolor{regime}{00\T}\textcolor{fraction}{1}$            & 
		$00\T$  & $-1$ 
		& $0$ &         & $-1$  & $-1$   & $1$ & $1$        & 
		$0.\overline{3}$  & 
		$1.\overline{3}$ & \num{4.4e-1}   \\ \hline
		$1\T01$   & $19$     & $1$  & 
		$\textcolor{regime}{000}\textcolor{fraction}{\T}$            & 
		$000$   & $0$  
		& $0$ &         & $0$   & $0$    & $1$ & $\T$       & 
		$-0.\overline{3}$ & 
		$0.\overline{7}$ & \num{6.7e-1}   \\ \hline
		$1\T1\T$  & $20$     & $1$  & 
		$\textcolor{regime}{000}\textcolor{fraction}{0}$             & 
		$000$   & $0$  
		& $0$ &         & $0$   & $0$    & $1$ & $0$        & 
		$0.0$             & 
		$1.0$            & \num{1.0e0}    \\ \hline
		$1\T10$   & $21$     & $1$  & 
		$\textcolor{regime}{000}\textcolor{fraction}{1}$             & 
		$000$   & $0$  
		& $0$ &         & $0$   & $0$    & $1$ & $1$        & 
		$0.\overline{3}$  & 
		$1.\overline{3}$ & \num{1.3e0}    \\ \hline
		$1\T11$   & $22$     & $1$  & 
		$\textcolor{regime}{001}\textcolor{fraction}{\T}$            & 
		$001$   & $1$  
		& $0$ &         & $1$   & $1$    & $1$ & $\T$       & 
		$-0.\overline{3}$ & 
		$0.\overline{7}$ & \num{2.0e0}    \\ \hline
		$10\T\T$  & $23$     & $1$  & 
		$\textcolor{regime}{001}\textcolor{fraction}{0}$             & 
		$001$   & $1$  
		& $0$ &         & $1$   & $1$    & $1$ & $0$        & 
		$0.0$             & 
		$1.0$            & \num{3.0e0}    \\ \hline
		$10\T0$   & $24$     & $1$  & 
		$\textcolor{regime}{001}\textcolor{fraction}{1}$             & 
		$001$   & $1$  
		& $0$ &         & $1$   & $1$    & $1$ & $1$        & 
		$0.\overline{3}$  & 
		$1.\overline{3}$ & \num{4.0e0}    \\ \hline
		$10\T1$   & $25$     & $1$  & 
		$\textcolor{regime}{01\T}\textcolor{fraction}{\T}$           & 
		$01\T$  & $2$  
		& $0$ &         & $2$   & $2$    & $1$ & $\T$       & 
		$-0.\overline{3}$ & 
		$0.\overline{7}$ & \num{6.0e0}    \\ \hline
		$100\T$   & $26$     & $1$  & 
		$\textcolor{regime}{01\T}\textcolor{fraction}{0}$            & 
		$01\T$  & $2$  
		& $0$ &         & $2$   & $2$    & $1$ & $0$        & 
		$0.0$             & 
		$1.0$            & \num{9.0e0}    \\ \hline
		$1000$    & $27$     & $1$  & 
		$\textcolor{regime}{01\T}\textcolor{fraction}{1}$            & 
		$01\T$  & $2$  
		& $0$ &         & $2$   & $2$    & $1$ & $1$        & 
		$0.\overline{3}$  & 
		$1.\overline{3}$ & \num{1.2e1}    \\ \hline
		$1001$    & $28$     & $1$  & 
		$\textcolor{regime}{010}\textcolor{exponent}{\T}$            & 
		$010$   & $3$  
		& $1$ & $\T$    & $4$   & $3$    & $0$ &            & 
		$0.0$             & 
		$1.0$            & \num{2.7e1}    \\ \hline
		$101\T$   & $29$     & $1$  & 
		$\textcolor{regime}{010}\textcolor{exponent}{0}$             & 
		$010$   & $3$  
		& $1$ & $0$     & $4$   & $4$    & $0$ &            & 
		$0.0$             & 
		$1.0$            & \num{8.1e1}    \\ \hline
		$1010$    & $30$     & $1$  & 
		$\textcolor{regime}{010}\textcolor{exponent}{1}$             & 
		$010$   & $3$  
		& $1$ & $1$     & $4$   & $5$    & $0$ &            & 
		$0.0$             & 
		$1.0$            & \num{2.4e2}    \\ \hline
		$1011$    & $31$     & $1$  & 
		$\textcolor{regime}{011}\textcolor{exponent}{\T}$            & 
		$011$   & $4$  
		& $2$ & $\T0$   & $10$  & $7$    & $0$ &            & 
		$0.0$             & 
		$1.0$            & \num{2.2e3}    \\ \hline
		$11\T\T$  & $32$     & $1$  & 
		$\textcolor{regime}{011}\textcolor{exponent}{0}$             & 
		$011$   & $4$  
		& $2$ & $00$    & $10$  & $10$   & $0$ &            & 
		$0.0$             & 
		$1.0$            & \num{5.9e4}    \\ \hline
		$11\T0$   & $33$     & $1$  & 
		$\textcolor{regime}{011}\textcolor{exponent}{1}$             & 
		$011$   & $4$  
		& $2$ & $10$    & $10$  & $13$   & $0$ &            & 
		$0.0$             & 
		$1.0$            & \num{1.6e6}    \\ \hline
		$11\T1$   & $34$     & $1$  & 
		$\textcolor{regime}{1\T\T}\textcolor{exponent}{\T}$          & 
		$1\T\T$ & $5$  
		& $3$ & $\T00$  & $28$  & $19$   & $0$ &            & 
		$0.0$             & 
		$1.0$            & \num{1.2e9}    \\ \hline
		$110\T$   & $35$     & $1$  & 
		$\textcolor{regime}{1\T\T}\textcolor{exponent}{0}$           & 
		$1\T\T$ & $5$  
		& $3$ & $000$   & $28$  & $28$   & $0$ &            & 
		$0.0$             & 
		$1.0$            & \num{2.3e13}   \\ \hline
		$1100$    & $36$     & $1$  & 
		$\textcolor{regime}{1\T\T}\textcolor{exponent}{1}$           & 
		$1\T\T$ & $5$  
		& $3$ & $100$   & $28$  & $37$   & $0$ &            & 
		$0.0$             & 
		$1.0$            & \num{4.5e17}   \\ \hline
		$1101$    & $37$     & $1$  & 
		$\textcolor{regime}{1\T0}\textcolor{exponent}{\T}$           & 
		$1\T0$  & $6$  
		& $4$ & $\T000$ & $82$  & $55$   & $0$ &            & 
		$0.0$             & 
		$1.0$            & \num{1.7e26}   \\ \hline
		$111\T$   & $38$     & $1$  & 
		$\textcolor{regime}{1\T0}\textcolor{exponent}{0}$            & 
		$1\T0$  & $6$  
		& $4$ & $0000$  & $82$  & $82$   & $0$ &            & 
		$0.0$             & 
		$1.0$            & \num{1.3e39}   \\ \hline
		$1110$    & $39$     & $1$  & 
		$\textcolor{regime}{1\T0}\textcolor{exponent}{1}$            & 
		$1\T0$  & $6$  
		& $4$ & $1000$  & $82$  & $109$  & $0$ &            & 
		$0.0$             & 
		$1.0$            & \num{1.0e52}   \\ \hline
		$1111$    & $40$     &      
		&                                                              
		&         
		&      &     &         &       &        &     &            
		&                   &                  & $\infty$       \\ 
		\hline
		\end{tabular}
		\egroup
	\end{center}
\end{table}
As an illustrative example, Table~\ref{tab:example} 
presents the decoding of all positive 
4-trit tekums, including a complete account of the intermediate quantities. 
Even at this limited size, tekums already exhibit a 
substantial dynamic range, approaching 
the recommended 
general-purpose range of $10^{\pm 55}$ derived in \cite{2024-takum}. 
Nevertheless, the 
format warrants further evaluation with respect to its numerical properties. 
This forms the focus of the following section.
\section{Evaluation}
Before evaluating tekums against other formats, it is first necessary to 
consider the general principles required for fair comparisons between binary 
and ternary systems.
\subsection{Comparing Binary with Ternary}
\label{sec:evaluation}
As noted in the introduction, a trit contains approximately 
$1.58$ bits of 
information. This difference must be accounted for when comparing binary and 
ternary formats. For example, when displaying a quantity relative to a bit 
count $n$, the ternary dataset must be scaled accordingly.
\par
A particular challenge arises when conducting benchmarks between formats, where 
such simple rescaling is not possible. The standard binary widths of number 
formats are $8$, $16$, $32$, and $64$ bits. A direct comparison between an 
$8$-bit format and an $8$-trit format would be misleading, as the latter 
possesses $3^8 = 6561$ representations in contrast to only $2^8 = 256$. By 
dividing the binary bit widths $8$, $16$, $32$, and $64$ by $\log_2(3) \approx 
1.58$, one obtains approximately $5.0$, $10.1$, $20.2$, and 
$40.4$, which are 
naturally matched by the integers $5$, $10$, $20$, and $40$. These represent 
the trit counts of ternary strings with approximately equivalent information 
content.
\par
The difficulty arises at $8$ bits: tekums are defined only for even trit 
counts, meaning a 5-trit tekum cannot be used. One possible approach is to 
evaluate both 4-trit and 6-trit variants and present both results, noting that 
an \enquote{imaginary} 5-trit tekum would fall somewhere in between. For higher 
precisions, however, suitable matches can be found at the even trit counts of 
$10$, $20$, and $40$, thereby enabling more direct 
comparisons.
\subsection{Format}
Posits consist of five bit fields (sign bit, regime 
bits, regime terminator 
bit, exponent bits, and fraction bits). Takums are 
structurally similar, 
also requiring five fields (sign bit, direction 
bit, regime bits, 
characteristic bits, and fraction bits). In 
contrast, tekums are considerably 
simpler, comprising only three trit fields: regime 
trits, exponent trits, and 
fraction trits. This simplicity may not only 
facilitate formal analysis but 
could also prove advantageous for hardware 
implementation.
\subsection{Accuracy}
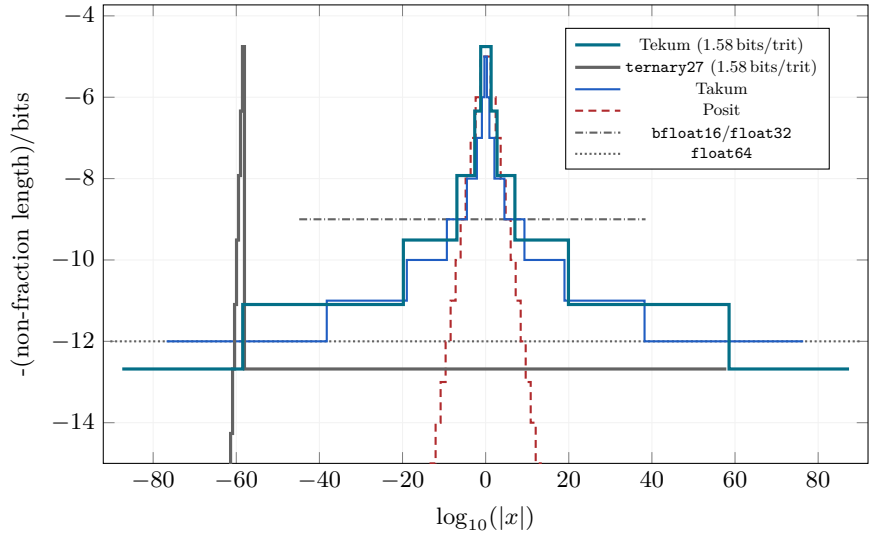
\begin{figure}[tbp]
	\begin{center}
		\begin{tikzpicture}
			\begin{axis}[
				reverse legend,
				%axis on top,
				scale only axis,
				width=\textwidth/1.2,
				height=\textwidth/2,
				xlabel={$\log_{10}(|x|)$},
				ylabel={-(non-fraction length)/bits},
				ymin=-15,
				xmin=-92,
				xmax=92,
				grid=major,
				grid style={line width=.1pt, draw=gray!10},
				legend style={nodes={scale=0.7, transform 
				shape}},
				legend style={at={(0.95,0.95)},anchor=north 
				east},
			]
				\addplot[const 
				plot,fraction,thick,densely
				dotted] table [x=exp10, y=float64, col 
				sep=comma]{code/exponent_cost-2-posit-float.csv};
				\addlegendentry{\texttt{float64}};

				\addplot[const plot,fraction,densely 
				dashdotted,thick] 
				table [x=exp10, 
				y=float32, 
				col 
				sep=comma]{code/exponent_cost-2-posit-float.csv};
				\addlegendentry{\texttt{bfloat16}/\texttt{float32}};

				\addplot[const 
				plot,sign,thick,densely
				 dashed] 
				table [x=exp10, 
				y=posit, col 
				sep=comma]{code/exponent_cost-2-posit-float.csv};
				\addlegendentry{Posit};

%				\addplot[const 
%plot,t-exponent,densely 
%				dashed,thick] 
%				table [x=exp10, 
%y=lintakum, col 
%				
%sep=comma]{code/exponent_cost-2-lintakum.csv};
%				
%\addlegendentry{Linear Takum};

				\addplot[const 
				plot,t-exponent,thick]
				 table 
				[x=exp10, 
				y=lintakum, col 
				sep=comma]{code/exponent_cost-2-lintakum.csv};
				\addlegendentry{Takum};

				% largest subnormals begin at (1-0.5) * 3^-121 
				%and have 24 trits
				% significant (exponent range from -121 to -145)
				% the trit count is multiplied with log2(3) ~ 
				%1.58
				\addplot[const 
				plot,fraction,very 
				thick] 
				coordinates 
				{
					(-69.180, -42.7940) % 1 trit 
					%significand, lower end 1.0 * 3^-145
					(-68.860, -39.6241) % 2 trit 
					%significand, lower end 0.7 * 3^-144
					(-68.529, -38.0391) % 3 trit 
					%significand, lower end 0.5 * 3^-143
					(-68.052, -36.4541) % 4 trit 
					%significand, lower end 0.5 * 3^-142
					(-67.575, -34.8692) % 5 trit 
					%significand, lower end 0.5 * 3^-141
					(-67.098, -33.2842) % 6 trit 
					%significand, lower end 0.5 * 3^-140
					(-66.621, -31.6993) % 7 trit 
					%significand, lower end 0.5 * 3^-139
					(-66.144, -30.1143) % 8 trit 
					%significand, lower end 0.5 * 3^-138
					(-65.667, -28.5293) % 9 trit 
					%significand, lower end 0.5 * 3^-137
					(-65.190, -26.9444) % 10 trit 
					%significand, lower end 0.5 * 3^-136
					(-64.712, -25.3594) % 11 trit 
					%significand, lower end 0.5 * 3^-135
					(-64.235, -23.7744) % 12 trit 
					%significand, lower end 0.5 * 3^-134
					(-63.758, -22.1895) % 13 trit 
					%significand, lower end 0.5 * 3^-133
					(-63.281, -20.6045) % 14 trit 
					%significand, lower end 0.5 * 3^-132
					(-62.804, -19.0196) % 15 trit 
					%significand, lower end 0.5 * 3^-131
					(-62.327, -17.4346) % 16 trit 
					%significand, lower end 0.5 * 3^-130
					(-61.850, -15.8496) % 17 trit 
					%significand, lower end 0.5 * 3^-129
					(-61.373, -14.2647) % 18 trit 
					%significand, lower end 0.5 * 3^-128
					(-60.895, -12.6797) % 19 trit 
					%significand, lower end 0.5 * 3^-127
					(-60.418, -11.0947) % 20 trit 
					%significand, lower end 0.5 * 3^-126
					(-59.941,  -9.5098) % 21 trit 
					%significand, lower end 0.5 * 3^-125
					(-59.464,  -7.9248) % 22 trit 
					%significand, lower end 0.5 * 3^-124
					(-58.987,  -6.3399) % 23 trit 
					%significand, lower end 0.5 * 3^-123
					(-58.510,  -4.7549) % 24 trit 
					%significand, -121, lower end 0.5 * 
					%3^-122
					(-58.032, -12.6797) % 19 trit 
					%significand, normal, lower end 0.5 * 
					%3^-121
					(57.9078, -12.6797) % 19 trit 
					%significant, normal, upper end 1.5 * 
					%3^121
				};
				\addlegendentry{\texttt{ternary27} 
				($1.58\,\text{bits}/\text{trit}$)};

%				\addplot[const plot,regime,thick,densely 
%dotted] table [x=exp10, y=tekum, col 
%				sep=comma]{code/exponent_cost-3-optimistic.csv};
%				\addlegendentry{Tekum 
%($1.00\,\text{bits}/\text{trit}$)};

				\addplot[const plot,regime,very thick] table 
				[x=exp10, y=tekum, col 
				sep=comma]{code/exponent_cost-3.csv};
				\addlegendentry{Tekum 
				($1.58\,\text{bits}/\text{trit}$)};
			\end{axis}
		\end{tikzpicture}
	\end{center}
	\caption{
		The number of non-fraction bits, which can be considered as 
		overhead,
		relative to the represented value 
		$x$ in a selection of floating-point
		formats. The y-axis is inverted, thus meaning that higher values
		mean less overhead.
	}
	\label{fig:accuracy}
\end{figure}
The first aspect of analysis concerns the accuracy 
of the tekum format. 
Following the approach taken in \cite{2024-takum}, one may consider the number 
of bits required to encode the sign and exponent of a given number $x > 0$, 
which can be interpreted as the overhead of encoding. The larger this overhead, 
the fewer fraction bits remain available. To enable comparison between binary 
and ternary formats, the overhead in trits for ternary formats is multiplied by 
$\log_2(3)$.
\par
The results are shown in Figure~\ref{fig:accuracy}. 
It is immediately evident 
that \texttt{ternary27} performs poorly, owing to its excessive waste of 
representations. The spike on the left is the result of a mechanism akin to 
subnormals; however, the effect is limited and occurs in an irrelevant region 
of numbers close to $10^{-60}$. 
\par
In contrast, tekums perform favourably. Compared to 
both posits and (linear) 
takums, tekums exhibit a region of highest accuracy 
of similar size to posits. 
This addresses a weakness of takums, which display 
a sharp drop in precision 
around the centre. Notably, tekums also achieve a 
significantly broader region 
of equal or superior precision compared with 
\texttt{bfloat16} and 
\texttt{float32}. In addition, tekums retain the same desirable logarithmic 
tapering property as takums.
\subsection{Dynamic Range}
\begin{figure}[tbp]
	\begin{center}
		\begin{tikzpicture}
			\begin{axis}[
				reverse legend,
				scale only axis,
				width=\textwidth/1.2,
				height=\textwidth/2,
				ymin=10^-95,
				ymax=10^95,
				xlabel={bit string length $n$},
				ylabel={$\log_{10}(\text{dynamic
				 range})$},
				ymode=log,
				ylabel shift=-0.1cm,
				xtick={2,4,8,16,32,64},
				xminorticks=true,
				yminorticks=true,
				ytick={10^-100,10^-80,10^-60,10^-40,10^-20,10^0,
					10^20,10^40,10^60,10^80,10^100},
				yticklabels={-100,-80,-60,-40,-20,0,20,40,60,80,100},
				grid=both,
				minor y tick num=0,
				grid style={line width=.1pt, draw=gray!10},
				major grid style={line width=.2pt,draw=gray!30},
				legend style={nodes={scale=0.7, transform 
				shape}},
				legend style={at={(0.97,0.5)},anchor=east}
			]
				\addplot[direction,mark=star,forget
				plot,thick,only 
				marks] 
				coordinates {(16,1.175494351e-38)};
				\addplot[direction,mark=star,thick,only
				 marks] 
				coordinates 
				{(16,3.38953139e38)};
				\addlegendentry{\texttt{bfloat16}};

				\addplot[fraction,mark=Mercedes
				 star,forget 
				plot,thick,only 
				marks] 
				coordinates {(8,6.103515625e-5)};
				\addplot[fraction,mark=Mercedes
				 star,thick,only 
				marks] 
				coordinates 
				{(8,57344.0)};
				\addlegendentry{\texttt{OFP8 E5M2}};

				\addplot[fraction,mark=x,forget
				 plot,thick,only 
				marks] 
				coordinates 
				{(8,0.016)};
				\addplot[fraction,mark=x,thick,only
				 marks] 
				coordinates 
				{(8,240)};
				\addlegendentry{\texttt{OFP8 E4M3}};

				\addplot[fraction,thick,densely
				dotted,mark=*,mark 
				options={solid,scale=0.3}]
				 table 
				[x=n, 
				y=ieee-subnormal-min, col sep=comma] 
				{code/dynamic_range.csv};
				\addlegendentry{IEEE 754 subnormal};

				\addplot[fraction,densely
				dashdotted,thick,mark=*,mark
				options={solid,scale=0.3}]
				table [x=n, 
				y=ieee-normal-min, col sep=comma] 
				{code/dynamic_range.csv};
				\addplot[fraction,forget
				 plot,densely 
				dashdotted,thick,mark=*,mark
				options={solid,scale=0.3}]
				table [x=n, 
				y=ieee-max, col 
				sep=comma] 
				{code/dynamic_range.csv};
				\addlegendentry{IEEE
				 754 normal};

				\addplot[sign,thick,mark=*,mark
				options={solid,scale=0.3}]
				 table [x=n, 
				y=posit2-min, col 
				sep=comma] {code/dynamic_range.csv};
				\addplot[sign,forget
				plot,thick,mark=*,mark
				options={solid,scale=0.3}]
				table [x=n, 
				y=posit2-max, col sep=comma] 
				{code/dynamic_range.csv};
				\addlegendentry{Posit};

				\addplot[t-exponent,thick,mark=*,mark
				options={solid,scale=0.3}]
				 table [x=n, 
				y=lintakum-min, 
				col sep=comma] 
				{code/dynamic_range.csv};
				\addplot[t-exponent,forget
				plot,thick,mark=*,mark
				options={solid,scale=0.3}]
				table [x=n, y=lintakum-max, col 
				sep=comma] {code/dynamic_range.csv};
				\addlegendentry{Takum};

%				
%\addplot[t-exponent] table [x=n, y=takum-min, 
%				col 
%				sep=comma] 
%{code/dynamic_range.csv};
%				
%\addplot[t-exponent,forget plot] table [x=n, 
%				y=takum-max, col 
%				sep=comma] 
%{code/dynamic_range.csv};
%				
%\addlegendentry{Takum};			

				\addplot[fraction,mark=o,forget
				 plot,only 
				marks,very thick] 
				coordinates {(42.794,6.6e-70)};
				\addplot[fraction,mark=o,only
				 marks,very thick] 
				coordinates 
				{(42.794,8.1e57)};
				\addlegendentry{\texttt{ternary27} 
				($1.58\,\text{bits}/\text{trit}$)};
%
%				\addplot[regime,thick,densely dotted] table 
%[x=n, y=tekum-min, col 
%				sep=comma] 
%{code/dynamic_range-3-optimistic.csv};
%				\addplot[regime,thick,densely dotted,forget 
%plot] table [x=n, y=tekum-max, col 
%				sep=comma] 
%{code/dynamic_range-3-optimistic.csv};
%				\addlegendentry{Tekum 
%($1.00\,\text{bits}/\text{trit}$)};

				\addplot[regime,very
				 thick,mark=*,mark 
				options={solid,scale=0.3}]
				 table [x=n, 
				y=tekum-min, col 
				sep=comma] {code/dynamic_range-3.csv};
				\addplot[regime,very
				 thick,forget 
				plot,mark=*,mark 
				options={solid,scale=0.3}]
				table 
				[x=n, y=tekum-max, 
				col sep=comma] 
				{code/dynamic_range-3.csv};
				\addlegendentry{Tekum
				($1.58\,\text{bits}/\text{trit}$)};
			\end{axis}
		\end{tikzpicture}
	\end{center}
	\caption{Dynamic range relative to the bit string length $n$ for tekum, 
	(linear) takum, posit and a selection of 
	floating-point formats.}
	\label{fig:dynamic_range}
\end{figure}
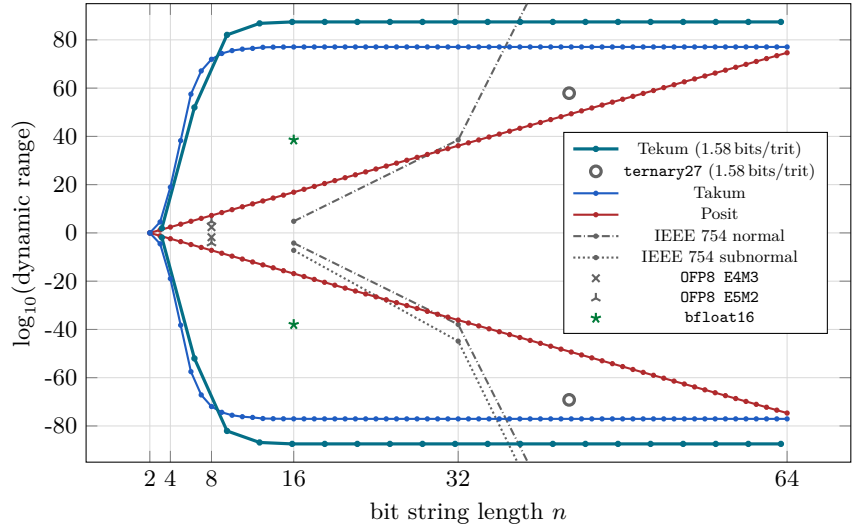%
The dynamic range results, shown in Figure~\ref{fig:dynamic_range}, largely 
confirm expectations, as the tekum dynamic range was explicitly designed to be 
$10^{\pm 87}$. Although this range exceeds the previously identified minimum 
recommended general-purpose dynamic range of $10^{\pm 55}$, it remains 
comfortably within the bound suggested by \textsc{Quevedo}. Owing to the 
tapered precision property, only a small fraction of values extend beyond 
$10^{\pm 55}$. 
\par
It is also noteworthy that, as with takums, the dynamic range of tekums 
increases rapidly at low precisions, quickly reaching its maximum, which is a 
desirable feature. The \texttt{ternary27} format 
also achieves a 
general-purpose dynamic range, but this range is asymmetric due to its 
inclusion of subnormal representations.
\subsection{Formal Analysis}
Building upon the discussion of the format's properties,
we now turn to the formal analysis and validation of the
tekum format in terms of its fundamental
characteristics, following the same methodology as applied
to the \emph{takum} format in \cite[Section~4.5]{2024-takum}.
We begin by showing that there exist no redundant encodings,
as stated in the following proposition.
\begin{proposition}[Uniqueness]
	Let $n \in 2\mathbb{N}_1$ and
	$\bm{t},\bm{u} \in \Tset_n$. It holds
	\begin{equation}
		\tekum_n(\bm{t}) = \tekum_n(\bm{u})
		\Rightarrow
		\bm{t} = \bm{u}.
	\end{equation}
\end{proposition}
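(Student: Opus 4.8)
The plan is to construct a left inverse of the encoding $\bm{t} \mapsto \tekum_n(\bm{t})$, thereby establishing injectivity. I would first dispose of the special values. As $\mathrm{NaR}$, $0$, and $\infty$ are pairwise distinct and distinct from every nonzero real in the codomain, an equality $\tekum_n(\bm{t}) = \tekum_n(\bm{u})$ that lands on one of them forces $\bm{t}$ and $\bm{u}$ into the same special branch of \eqref{eq:tekum}; since each such branch is triggered by a single string (respectively $\T\cdots\T$, $0\cdots0$, and $1\cdots1$), this already yields $\bm{t} = \bm{u}$. It remains to treat the generic case where both values equal some $v = s\cdot(1+f)\cdot 3^{e} \in \mathbb{R}\setminus\{0\}$. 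Here $\integer_n(\bm{t}) \neq 0$, so $s \in \{-1,1\}$, and since $1+f \in (0.5,1.5)$ is positive we have $\sign(v) = s$; equal values therefore share the sign $s$, and I may reduce to recovering the magnitude data $(e,f)$ from $|v|$.

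For the magnitude, $|v| = (1+f)\cdot 3^{e}$ lies in the open interval $(0.5\cdot 3^{e},\,1.5\cdot 3^{e})$. These intervals abut, $1.5\cdot 3^{e} = 0.5\cdot 3^{e+1}$, and their common endpoints are never attained, since $(1+f) = 0.5\cdot 3^{k}$ would require $3^{k}\in(1,3)$ for some $k\in\mathbb{Z}$. Hence the intervals are pairwise disjoint, $e$ is the unique integer with $|v|\in(0.5\cdot 3^{e},1.5\cdot 3^{e})$, and then $f = |v|\cdot 3^{-e}-1$ is fixed. The crux is to recover the regime from $e$. Using $c = \max(0,|r|-2)$ and $b = \sign(r)\lfloor 3^{|r|-2}+1\rfloor$, each regime contributes the block of exponent values $\{\,b-\tfrac{1}{2}(3^{c}-1),\dots,b+\tfrac{1}{2}(3^{c}-1)\,\}$, and I would verify by a direct finite check that these blocks are pairwise disjoint and consecutive, with $\sign(e)=\sign(r)$ and $e=0 \Leftrightarrow r=0$ (cf.\ Table~\ref{tab:regime_mappings}). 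This is the main obstacle: it is precisely the property the biases were engineered to guarantee, and it is what makes the assignment $e \mapsto r$ well defined. Everything after it is the bijectivity of the integer maps.

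Once $r$ is recovered, the field lengths $c$ and $p = n-c-3$ are determined, so the anchor $\textcolor{regime}{\bm{r}}\mdoubleplus\textcolor{exponent}{\bm{e}}\mdoubleplus\textcolor{fraction}{\bm{f}}$ splits unambiguously. Because $\integer_3$, $\integer_c$, and $\integer_p$ are bijections onto their respective value sets, the relations $r=\integer_3(\textcolor{regime}{\bm{r}})$, $\integer_c(\textcolor{exponent}{\bm{e}})=e-b$, and $\integer_p(\textcolor{fraction}{\bm{f}})=3^{p}f$ recover each field uniquely (the empty-string cases $c=0$ or $p=0$ being trivial), hence the whole string $\anchor_n(\bm{t})$. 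Since adding the fixed string $1\T\cdots1\T$ is a bijection on $\Tset_n$, the magnitude $|\bm{t}| = \anchor_n(\bm{t}) + 1\T\cdots1\T$ is determined, and together with $s$ (and $\integer_n(\bm{t})\neq0$) this pins down $\bm{t}$; applying the reconstruction to both $\bm{t}$ and $\bm{u}$ gives $\bm{t}=\bm{u}$. Throughout I assume $n\ge 8$, so that $p\ge 0$ for every regime; the smaller even widths follow from the injective zero-padding convention of Definition~\ref{def:tekum}.
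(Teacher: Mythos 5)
Your proposal is correct and follows essentially the same route as the paper's proof: dispose of the three special values, recover the sign $s$ from positivity of $(1+f)\cdot 3^{e}$, and recover $(e,f)$ uniquely from the disjointness of the abutting intervals $(0.5\cdot 3^{e},\,1.5\cdot 3^{e})$. The only difference is one of detail: where the paper closes with ``by construction, $(s,e,f)$ determines $\bm{t}$,'' you explicitly carry out that reconstruction (regime from the disjoint exponent blocks, then the fields via bijectivity of the integer maps and of the anchor), which is a welcome elaboration rather than a different argument.
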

\begin{proof}
	Assume $\tekum_n(\bm{t}) = \tekum_n(\bm{u})$. 
	By construction, it follows that $\bm{t} = \bm{u}$
	for the special cases 
	$\tekum_n(\bm{t}),\tekum_n(\bm{u}) \in 
	\{0,\infty,\mathrm{NaR}\}$. Hence, it remains to
	consider the non-special case, for which we define
	$\tekum_n(\bm{t}) := s \cdot (1+f) \cdot 3^e$ and
	$\tekum_n(\bm{u}) := \tilde{s} \cdot 
	(1+\tilde{f}) \cdot 3^{\tilde{e}}$.
	\par
	Given $\tekum_n(\bm{t}) = \tekum_n(\bm{u})$, we
	can immediately deduce $s = \tilde{s}$, since
	$(1+f)$, $(1+\tilde{f})$, $3^{\tilde{e}}$, 
	and $3^e$ are all positive. Moreover, as
	$(1+f),(1+\tilde{f}) \in (0.5,1.5)$, both
	$f=\tilde{f}$ and $e=\tilde{e}$ are uniquely
	determined. By construction, 
	$(s,e,f)=(\tilde{s},\tilde{e},\tilde{f})$ 
	implies 
	$\bm{t}=\bm{u}$.\qed
\end{proof}
Although this is a straightforward result, it highlights that, by
design, no ternary representations are wasted on redundant encodings.
We now proceed to show that ternary negation of a tekum
corresponds to negating its numerical value.
\begin{proposition}[Negation]\label{prop:negation}
	Let $n \in 2\mathbb{N}_1$ and
	$\bm{t} \in \Tset_n$ with $\tekum_n(\bm{t}) 
	\notin \{\mathrm{NaR},\infty\}$. It holds
	\begin{equation}
		\tekum_n(-\bm{t}) = 
		-\tekum_n(\bm{t}).
	\end{equation}
\end{proposition}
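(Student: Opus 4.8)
The plan is to reduce everything to the single identity $\integer_n(-\bm{t}) = -\integer_n(\bm{t})$ noted after the definition of negation, and to proceed by the case split of the piecewise definition~\eqref{eq:tekum}. The hypothesis $\tekum_n(\bm{t}) \notin \{\mathrm{NaR}, \infty\}$ rules out $\bm{t} = \T\cdots\T$ and $\bm{t} = 1\cdots1$, so only two situations remain: the zero case $\bm{t} = 0\cdots0$ and the generic (``otherwise'') case. First I would dispose of the zero case: if $\bm{t} = 0\cdots0$ then $-\bm{t} = 0\cdots0 = \bm{t}$, whence $\tekum_n(-\bm{t}) = 0 = -0 = -\tekum_n(\bm{t})$.

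The core of the argument is the generic case, which rests on two facts, both consequences of $\integer_n(-\bm{t}) = -\integer_n(\bm{t})$: (i) $|-\bm{t}| = |\bm{t}|$, and (ii) $\sign(\integer_n(-\bm{t})) = -\sign(\integer_n(\bm{t}))$. For (i) I would run a short case distinction on the sign of $\integer_n(\bm{t})$ against the definition of the modulus: if $\integer_n(\bm{t}) > 0$ both moduli equal $\bm{t}$, if $\integer_n(\bm{t}) < 0$ both equal $-\bm{t}$, and the remaining value $\integer_n(\bm{t}) = 0$ cannot occur here since $\bm{t} \ne 0\cdots0$. Because $\anchor_n$ depends on its argument only through the modulus, fact (i) upgrades immediately to $\anchor_n(-\bm{t}) = |-\bm{t}| - 1\T\cdots1\T = |\bm{t}| - 1\T\cdots1\T = \anchor_n(\bm{t})$. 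Hence the split $\textcolor{regime}{\bm{r}} \mdoubleplus \textcolor{exponent}{\bm{e}} \mdoubleplus \textcolor{fraction}{\bm{f}}$ is identical for $\bm{t}$ and $-\bm{t}$, and with it all derived quantities $r$, $c$, $p$, $b$, $e$ and $f$. Fact (ii) then isolates the sign as the only thing that changes: with $s := \sign(\integer_n(\bm{t}))$ and $\tilde{s} := \sign(\integer_n(-\bm{t}))$ we have $\tilde{s} = -s$.

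Before substituting I would confirm that $-\bm{t}$ again falls in the generic branch. Since $\bm{t} \notin \{\T\cdots\T, 0\cdots0, 1\cdots1\}$ and negation is an involution that swaps $\T\cdots\T \leftrightarrow 1\cdots1$ and fixes $0\cdots0$, the string $-\bm{t}$ is likewise non-special, so the fourth branch of~\eqref{eq:tekum} governs both sides. Combining the above then gives $\tekum_n(-\bm{t}) = \tilde{s}\cdot(1+f)\cdot 3^e = -s\cdot(1+f)\cdot 3^e = -\tekum_n(\bm{t})$, as claimed.

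I expect the only real subtlety, rather than a genuine obstacle, to lie in the special-case bookkeeping: one must check that negation preserves the trichotomy of branches (sending the generic case to itself and permuting the excluded and zero cases among themselves) so that the same branch is invoked on both sides, and that the edge value $\integer_n(\bm{t}) = 0$ is correctly excluded from the generic case. Once the anchor-invariance $\anchor_n(-\bm{t}) = \anchor_n(\bm{t})$ is established, the rest is a direct substitution.
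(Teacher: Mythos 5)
Your proposal is correct and follows essentially the same route as the paper's proof: dispose of the zero case trivially, establish the anchor invariance $\anchor_n(-\bm{t}) = |-\bm{t}| - 1\T\cdots1\T = |\bm{t}| - 1\T\cdots1\T = \anchor_n(\bm{t})$ so that all fields and derived quantities coincide, and observe that only the sign $s$ flips. Your additional bookkeeping (explicitly verifying $|-\bm{t}| = |\bm{t}|$ and that negation preserves the branch trichotomy) is left implicit in the paper but is the same argument.
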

\begin{proof}
	The only special case, namely $\bm{t} 
	=0\cdots 0$, is trivial, given
	$\tekum_n(-0\cdots 0) = 
	\tekum_n(0\cdots 0) = 0$
	holds. Furthermore, it holds that
	\begin{equation}
		\anchor_n(-\bm{t}) = 
		|-\bm{t}| - 
		1\T\cdots1\T = |\bm{t}| - 
		1\T\cdots1\T =
		\anchor_n(\bm{t}),
	\end{equation}
	which implies that $\bm{t}$ and
	$-\bm{t}$ share the same
	\textcolor{regime}{$\bm{r}$},
	\textcolor{exponent}{$\bm{e}$},
	and \textcolor{fraction}{$\bm{f}$},
	and thus the same fraction value $f$ and
	exponent value $e$. The only distinction
	between them lies in their signs $s$, which
	are opposite, yielding the desired result.\qed
\end{proof}
%%%%%%%%%%%%%%%%%%%%%%%%%%%%%%%%%%%%%%%%%%%%%%%%%%%%%%%%%%%%%%%%%%%%%%%%%%%%%%%%%%%%%%%%%%%%%%%%%%%
Another result concerns the ternary monotonicity
of tekums, which is formalised as follows.
\begin{proposition}[Monotonicity]
	Let $n \in 2\mathbb{N}_1$ and
	$\bm{t},\bm{u} \in \Tset_n$. It holds
	\begin{equation}
		\integer_n(\bm{t}) < 
		\integer_n(\bm{u})
		\Rightarrow
		\tekum_n(\bm{t}) < \tekum_n(\bm{u}).
	\end{equation}
\end{proposition}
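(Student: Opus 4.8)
The plan is to reduce the full ordering claim to strict monotonicity on the positive quadrant and then exploit the tapered structure of the exponent/fraction decoding. First I would dispose of the three special strings: since $\T\cdots\T$, $0\cdots0$, and $1\cdots1$ are the integer-minimal, integer-zero, and integer-maximal strings, and $\mathrm{NaR}$, $0$, $\infty$ are by convention the minimum, the sign boundary, and the maximum of the value set, any comparison involving one of them is immediate. For the remaining strings the sign $s=\sign(\integer_n(\bm{t}))$ already places negative values below $0$ below positive values, so it suffices to prove the implication when both $\bm{t}$ and $\bm{u}$ have positive integer value. The case of two negative strings then follows by applying the Negation proposition (Proposition~\ref{prop:negation}) to $-\bm{t},-\bm{u}$, which are positive, and the mixed cases follow from the sign alone.

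For the positive quadrant, write $a:=\integer_n(\anchor_n(\bm{t}))$. Since $\anchor_n(\bm{t})=|\bm{t}|-1\T\cdots1\T$ with $\integer_n(1\T\cdots1\T)=(3^n-1)/4$ (an integer because $n$ is even), the map $\integer_n(\bm{t})\mapsto a$ is an increasing affine bijection onto a contiguous integer range, so by transitivity it suffices to show that $v(a):=(1+f)\cdot 3^{e}$ strictly increases when $a$ is incremented by one. Using the place-value identity $a=r\cdot 3^{n-3}+\integer_c(\bm{e})\cdot 3^{p}+\integer_p(\bm{f})$ (with $n-3=c+p$), incrementing $a$ by one does exactly one of two things: Case A, it fixes the exponent trits and raises $\integer_p(\bm{f})$ by one, so $f$ grows and $v$ grows; or Case B, a balanced-ternary carry propagates, raising the exponent value $e$ by one while resetting the fraction from its maximal to its minimal value.

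Two facts make Case B increase $v$. First, the exponent values are contiguous, i.e. $e$ runs through consecutive integers as $a$ sweeps upward. Within a regime this is immediate from $e=\integer_c(\bm{e})+b$; across a regime boundary $r\to r+1$ it reduces to the bias identity $b(r{+}1)-b(r)=2\cdot 3^{\,r-3}$ for $|r|\ge 3$ (with the small regimes $|r|\le 2$ checked directly), all of which can be read off the chosen sequence $(0,1,2,4,10,28,82,244)$ and Table~\ref{tab:regime_mappings}. Second, the ranges tile: because $1+f\in(0.5,1.5)$ and $1.5\cdot 3^{e}=0.5\cdot 3^{e+1}$, the maximal value at exponent $e$, namely $(1.5-\tfrac12 3^{-p})3^{e}$, lies below the minimal value at exponent $e+1$, namely $(0.5+\tfrac12 3^{-p'})3^{e+1}=(1.5+\tfrac32 3^{-p'})3^{e}$, even when the fraction width drops from $p$ to $p'=p-1$ across the boundary; the inequality $1.5-\tfrac12 3^{-p}<1.5+\tfrac32 3^{-p'}$ is trivially true. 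Hence $v$ increases in both cases, giving monotonicity throughout the quadrant.

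I expect the main obstacle to be the bookkeeping at regime boundaries, where the exponent-trit count $c$ (and therefore the fraction width $p$) changes. One must confirm that the carry propagates exactly into the three regime trits, so that $r$ increments by one while the exponent and fraction fields simultaneously reset to their extremes, and that contiguity and tiling survive the change in field widths. The truncation of the outermost regimes $r=\pm 7$, forced by the finite anchor range and visible in Table~\ref{tab:regime_mappings}, needs a short remark; since it only removes the upper part of the value sequence, it leaves strict monotonicity intact.
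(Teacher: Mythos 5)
Your proposal is correct and follows essentially the same route as the paper: dispose of the special cases, use the sign and the negation proposition to reduce to successive monotonicity on the positive quadrant, and then do a carry analysis on the incremented anchor, with your ``contiguity plus tiling'' treatment of Case~B merely merging the paper's separate exponent-carry and regime-carry cases into one step. The only slip is the quoted bias difference, which should be $b(r{+}1)-b(r)=2\cdot 3^{|r|-2}$ rather than $2\cdot 3^{|r|-3}$, but this does not affect the contiguity claim or the argument.
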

\begin{proof}
	We first consider the special cases. By convention, 
	$\mathrm{NaR}$ is smaller than any other represented 
	tekum value, which means that
	$\mathrm{NaR} < \tekum_n(\bm{u})$ holds. It also holds 
	$\integer_n(\T\cdots\T) < \integer_n(\bm{u})$, and 
	$\T\cdots\T$ is the ternary representation of 
	$\mathrm{NaR}$. Hence, monotonicity is satisfied for 
	$\mathrm{NaR}$. An analogous argument applies to 
	$\infty$, which by convention is larger than any other 
	represented tekum value and has the ternary 
	representation $1\cdots 1$, thereby establishing 
	monotonicity for $\infty$.
	\par
	We now turn to the non-special cases. By 
	Proposition~\ref{prop:negation}, it suffices to prove 
	the result for $\bm{t}$ and $\bm{u}$ with 
	$\integer_n(\bm{t}) > 0$ and $\integer_n(\bm{u}) > 0$, 
	as $\integer_n(0\cdots0) = 0$ and 
	$\tekum_n(0\cdots0) = 0 < 
	\tekum_n(\bm{t}), \tekum_n(\bm{u})$. 
	The objective can be further reduced to showing that, 
	for any $\bm{t} \in \{0\cdots01,\dots,1\cdots01\}$,
	\begin{equation}
		\tekum_n(\bm{t}) < 
		\tekum_n(\bm{t} + 0\cdots01),
	\end{equation}
	i.e.\ the successive monotonicity of the positive 
	non-special cases.
	\par

	\par
	We denote
	$\anchor_n(\bm{t}) := 
	\textcolor{regime}{\bm{r}} \mdoubleplus
	\textcolor{exponent}{\bm{e}}
	 \mdoubleplus 
	\textcolor{fraction}{\bm{f}}$ and
	$\anchor_n(\bm{t} + 0\cdots01) := 
	\textcolor{regime}{\bm{\tilde{r}}} 
	\mdoubleplus
	\textcolor{exponent}{\bm{\tilde{e}}}
	 \mdoubleplus 
	\textcolor{fraction}{\bm{\tilde{f}}}$,
	where tildes denote the components of the incremented 
	ternary vector. For $\integer_n(\bm{t}) > 0$, we have 
	$|\bm{t}| = \bm{t}$, and in particular, provided there 
	is no overflow,
	\begin{equation}
		\anchor_n(\bm{t} + 0\cdots01) = 
		\anchor_n(\bm{t}) + 0\cdots01.
	\end{equation}
	Hence, the field 
	$\textcolor{regime}{\bm{r}}
	\mdoubleplus
	\textcolor{exponent}{\bm{e}}
	\mdoubleplus
	\textcolor{fraction}{\bm{f}}$
	is incremented, and to establish monotonicity, we must
	examine how this affects 
	$\textcolor{regime}{\bm{r}}$,
	$\textcolor{exponent}{\bm{e}}$ and
	$\textcolor{fraction}{\bm{f}}$.
	\par
	Although we work in balanced ternary, carries may still
	occur, analogous to the binary case. We therefore
	distinguish cases depending on whether the increment
	affects only 
	$\textcolor{fraction}{\bm{f}}$, 
	or whether a carry propagates into 
	$\textcolor{exponent}{\bm{e}}$.
	The latter occurs precisely when
	$\textcolor{fraction}{\bm{f}} = 1\cdots 1$,
	in which case we have
	$\textcolor{fraction}{\bm{\tilde{f}}} = 
	\T\cdots \T$, leading to an additional case 
	distinction on 
	$\textcolor{exponent}{\bm{e}}$.
	In all cases, $\tilde{s} = s$.
	\par
	Following this scheme, as the first case we 
	assume $\textcolor{fraction}{\bm{f}} 
	\neq 1\cdots1$. It then holds
	$\textcolor{regime}{\bm{\tilde{r}}} =
	\textcolor{regime}{\bm{r}}$,
	$\textcolor{exponent}{\bm{\tilde{e}}} = 
	\textcolor{exponent}{\bm{e}}$ and
	$\tilde{f} = f + 3^{-p} > f$, in particular 
	$\tilde{e}=e$. It follows that
	\begin{equation}
	\tekum_n(\bm{t} + 0\cdots01) =
		\tilde{s} \cdot (1+\tilde{f}) 
		3^{\tilde{e}} =
		s \cdot (1 + \tilde{f}) 3^e >
		s \cdot (1 + f) 3^e = 
		\tekum_n(\bm{t}),
	\end{equation}
	which was to be shown.
	\par
	The second case is 
	$\textcolor{fraction}{\bm{f}} =
	1\cdots1$ and $\textcolor{exponent}{\bm{e}} 
	\neq 1\cdots 1$. Then
	$\textcolor{fraction}{\bm{\tilde{f}}} =
	\T\cdots\T$ and 
	$\textcolor{regime}{\bm{\tilde{r}}} = 
	\textcolor{regime}{\bm{r}}$,
	which implies $\tilde{r} = r$, $\tilde{c} = c$,
	$\tilde{p}=p$, $\tilde{b}=b$, and 
	$\tilde{e} = e + 1$ since the exponent field
	increments without overflowing. As
	$\textcolor{fraction}{\bm{\tilde{f}}} = 
	-\textcolor{fraction}{\bm{f}}$, we have
	$\tilde{f} = -f$, and thus
	\begin{equation}
		\tekum_n(\bm{t} + 0\cdots01) =
			\tilde{s} \cdot (1+\tilde{f}) 
			3^{\tilde{e}} =
			s \cdot (1 - f) 3^{e+1} >
			s \cdot (1 + f) 3^e = 
			\tekum_n(\bm{t}),
	\end{equation}
	as required.
	\par
	The third case is
	$\textcolor{regime}{\bm{r}} \neq 1\cdots 1$,
	$\textcolor{exponent}{\bm{e}} = 1\cdots 1$
	and
	$\textcolor{fraction}{\bm{f}} = 1\cdots 1$. 
	Then
	$\textcolor{fraction}{\bm{\tilde{f}}} =
	\T\cdots\T$ (so $\tilde{f} = -f$) and 
	$\textcolor{exponent}{\bm{\tilde{e}}} =
	\T\cdots\T$. In particular, $\tilde{r} = r+1$,
	hence $\tilde{b} > b$. By construction, the bias
	is chosen such that each exponent range is
	distinct for every regime. We can therefore 
	deduce $\tilde{e} > e$, and it follows that
	\begin{equation}
		\tekum_n(\bm{t} + 0\cdots01) =
		s \cdot (1 - f) 3^{\tilde{e}} \ge
		s \cdot (1 - f) 3^{e+1} >
		s \cdot (1 + f) 3^e = 
		\tekum_n(\bm{t}),
	\end{equation}
	as required.
	\par
	There is no fourth case where all trit fields
	are $1\cdots1$, as $\bm{t}$ is explicitly at
	most $1\cdots01$. Thus, all possibilities have
	been exhausted, completing the proof.
	\qed
\end{proof}
The significance of this result lies in the fact that there
exists a direct correspondence between the ordering of ternary
integer and tekum representations. This property, also present
for posits and takums, offers a notable implementation advantage:
the same logic used for integer comparison can be employed for
floating-point comparison, even eliminating the need for explicit
special-case handling of $\mathrm{NaR}$ and $\infty$.
\par
A particularly noteworthy aspect in the context of tekums
is that both $\mathrm{NaR}$ and $\infty$ integrate seamlessly into
this framework. Since tekums constitute the first balanced ternary
real number system, the introduction of a second special value,
$\infty$, in addition to $\mathrm{NaR}$, posed a potential risk of
breaking monotonicity. Remarkably, however, the structure remains
fully consistent, an elegant and non-trivial property of the format.
\par
We now turn to the next aspect, namely rounding.
Balanced ternary exhibits the elegant property
that rounding an integer is equivalent to simply truncating it.
This can be seen by considering an arbitrary real number expressed in
balanced ternary as the sum of an integral and a fractional part,
\begin{equation}
	\sum_{i=0}^{\infty} \bm{a}_i 3^i +
	\sum_{i=0}^{\infty} \bm{b}_i 3^{-(i+1)},
\end{equation}
with $\bm{a}, \bm{b} \in \Tset_\infty$, which we wish
to round to the nearest integer. The outcome depends solely on the
fractional part, which in the binary case lies in the interval
$[0,1)$, necessitating a distinction between the intervals $[0, 0.5)$,
exactly $0.5$, and $(0.5, 1)$. 
\par
In the balanced ternary case, however, the fractional part lies in the
interval $(-0.5, 0.5)$, meaning the nearest integer is always the integral
part $\sum_{i=0}^{\infty} \bm{a}_i 3^i$. In other words, rounding
amounts to unconditionally truncating the fractional part.
The question is how this property translates to tekums, which would
benefit from such stable rounding behaviour. For this, we provide the
following:
\begin{proposition}[Truncation is Rounding]
	Let $n \in 2\mathbb{N}_3$, $\bm{t} \in
	\Tset_n$ with $\tekum(\bm{t}) \notin 
	\{\mathrm{NaR},\infty\}$ and $\bm{a} := 
	\anchor_n(\bm{t})
	\in \Tset_n$. It holds
	\begin{equation}
		\anchor_n^\text{inv}(\bm{a}_{n-1} 
		\cdots \bm{a}_2) =
		\argmin_{\bm{u} \in 
		\Tset_{n-2}}(|\tekum(\bm{t}) -
		\tekum(\bm{u})|).
	\end{equation}
\end{proposition}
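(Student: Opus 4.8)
The plan is to reduce to the positive non-special case and then show that discarding the two least significant trits of $\anchor_n(\bm{t})$ rounds the fraction to the nearest value expressible with two fewer trits, which — by the balanced ternary rounding-is-truncation property — keeps us within half a grid step of the true value. First I would fix the reading of $\anchor_{n-2}^\text{inv}$: since $\anchor$ discards the sign through the modulus, its inverse is taken on the branch whose sign agrees with that of $\bm{t}$, the other branch being the mirror image. With this convention both sides of the claimed identity are equivariant under $\bm{t}\mapsto-\bm{t}$ — the left because $\anchor_n(-\bm{t})=\anchor_n(\bm{t})$ (as in the proof of Proposition~\ref{prop:negation}) together with the sign convention, the right because $\tekum$ negates with $\bm{t}$ — so it suffices to treat $\integer_n(\bm{t})>0$, where $|\bm{t}|=\bm{t}$. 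The value $\tekum(\bm{t})=0$, i.e.\ $\bm{t}=0\cdots0$, I would handle directly: its truncated anchor maps under $\anchor_{n-2}^\text{inv}$ to $0\cdots0$, the unique nearest $(n-2)$-trit value.

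Write $\bm{a}:=\anchor_n(\bm{t})=\textcolor{regime}{\bm{r}}\mdoubleplus\textcolor{exponent}{\bm{e}}\mdoubleplus\textcolor{fraction}{\bm{f}}$. The crucial structural observation is that truncation removes trits from the \emph{bottom}, so $\bm{a}_{n-1}\cdots\bm{a}_2$ retains exactly the three leading regime trits $\textcolor{regime}{\bm{r}}$ and, provided the fraction is at least two trits long ($p\ge2$), the same $c$ exponent trits $\textcolor{exponent}{\bm{e}}$. Hence $r$, $c$, the bias $b$, and the exponent value $e$ are identical for $\bm{u}:=\anchor_{n-2}^\text{inv}(\bm{a}_{n-1}\cdots\bm{a}_2)$, and only the fraction is shortened from $\textcolor{fraction}{\bm{f}}$ to $\textcolor{fraction}{\bm{f}_{p-1}\cdots\bm{f}_2}$. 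I would then compute
\[
	f-f' = 3^{-p}\,\integer_2(\bm{f}_1\bm{f}_0),\qquad \integer_2(\bm{f}_1\bm{f}_0)\in\{-4,\dots,4\},
\]
so that, since $\bm{t}$ and $\bm{u}$ share the same $s$ and $e$, we get $|\tekum_n(\bm{t})-\tekum_{n-2}(\bm{u})| = 3^{e-p}\,|\integer_2(\bm{f}_1\bm{f}_0)| \le 4\cdot 3^{e-p}$.

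Next I would compare this error against the spacing of the $(n-2)$-trit grid. Consecutive tekums sharing the exponent $e$ differ by $3^{e-(p-2)}=9\cdot 3^{e-p}$, so half a step is $4.5\cdot 3^{e-p}$; the estimate $4\cdot 3^{e-p}<4.5\cdot 3^{e-p}$ (equivalently $\tfrac{4}{9}<\tfrac{1}{2}$) is exactly the balanced ternary statement that dropping trits rounds to the nearest value. Consequently $\bm{u}$ is strictly closer to $\tekum_n(\bm{t})$ than either of its immediate $(n-2)$-trit neighbours, and by the Monotonicity proposition every more distant representation is farther still, so $\bm{u}$ realises the $\argmin$.

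The remaining work, which I expect to be the main obstacle, is the boundary bookkeeping. The clean argument above assumes $p\ge2$, which holds automatically once $n\ge10$; for $n\in\{6,8\}$ the extreme regimes have $p<2$, so the truncation reaches into the exponent field and the length-$(n-2)$ decoding must be read through the zero-padding convention of the definition. I would also have to treat the cases where a neighbouring $(n-2)$-trit tekum lies in an adjacent binade or regime: at a binade boundary the gap widens to $2\cdot 3^{e-(p-2)}$ because of the seamless identity $1.5\cdot 3^e=0.5\cdot 3^{e+1}$, while at a regime boundary the fraction length $p$ jumps, so the half-step comparison must be redone. In each such case the gap only grows, so the $\tfrac{4}{9}$-margin survives; verifying this uniformly across every field-boundary crossing and reconciling it with the padding convention is the delicate part, whereas the interior case is simply the integer truncation-is-rounding fact rescaled by $3^{e-p}$.
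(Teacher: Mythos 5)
Your core argument rests on the same idea as the paper's proof --- dropping low-order trits of the anchor is exact balanced-ternary rounding because the discarded tail lies strictly within half a unit of the last retained place --- but you organise it differently. The paper peels off one trit at a time and splits into exactly two cases by field: truncating a fraction trit (where $\tilde{f}$ is the nearest representable fraction and $e$ is unchanged) and truncating an exponent trit (where $f = \tilde{f} = 0$, the regime, bias and exponent-trit count are preserved, and $\tilde{e}$ is the nearest representable exponent). You instead remove both trits at once, make the margin quantitative ($|f - f'| \le 4\cdot 3^{-p}$ against a half-spacing of $4.5\cdot 3^{-p}$, i.e.\ $4/9 < 1/2$), and invoke monotonicity to reduce the $\argmin$ to a comparison with the two bracketing $(n-2)$-trit tekums; the negation reduction and the explicit neighbour-gap accounting at binade and regime boundaries are likewise additions the paper does not spell out, and your observation that tapered precision only widens those gaps is what makes the half-step comparison survive there. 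This is, if anything, more rigorous than the paper's one-line appeal to ``closest representable value''. The one place where you do strictly less is the case you defer: when $p < 2$ (possible for $n \in \{6,8\}$ at the extreme regimes) the truncation reaches into the exponent field, and the paper treats this as its explicit second case rather than as residual bookkeeping; your proof as written does not yet cover it, and neither you nor the paper verifies that the nearest integer exponent yields the nearest value $3^{\tilde{e}}$, a small convexity check that is true but not automatic. With that case filled in, your argument is complete and essentially a sharpened version of the paper's.
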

In other words, for a given $n$-trit tekum, the closest $(n-2)$-trit
tekum is obtained by truncating its anchor by two trits and converting
it back to a tekum (using $\bm{t}$'s sign). Since the conversion
between a tekum and its anchor is lossless, this property generalises
to any truncation of the anchor, provided the anchor remains at least
four trits long to avoid truncating the regime trits.
\begin{proof}
	Within the scope of the anchor, we are not restricted to even trit 
	counts, and may therefore consider the effect of truncating a single 
	trit. Let
	\begin{equation}
		\bm{a} = \textcolor{regime}{\bm{r}} \mdoubleplus 
		\textcolor{exponent}{\bm{e}} \mdoubleplus 
		\textcolor{fraction}{\bm{f}},
	\end{equation}
	and denote all truncated quantities with a tilde. Since truncation 
	never occurs within the regime trits, we only distinguish between 
	truncating an exponent trit and truncating a fraction trit. In both 
	cases, the sign $s$ remains unchanged.
	\par
	In the first case, truncating a trit from the fraction yields 
	$\textcolor{fraction}{\bm{\tilde{f}}}$, which represents the closest 
	integer to $\textcolor{fraction}{\bm{f}}$. Consequently, $\tilde{f}$ is 
	the closest representable value to $f$, and since $e$ is unchanged, the 
	tekum value
	\begin{equation}
		s \cdot (1+\tilde{f}) \cdot 3^{\tilde{e}} = s \cdot 
		(1+\tilde{f}) \cdot 3^e
	\end{equation}
	is optimally close to $\tekum(\bm{t})$, as required.
	\par
	In the second case, we truncate a trit from the exponent trits.
	It holds $f = \tilde{f} = 0$. As the regime trits remain unchanged, 
	both the bias and the number of exponent trits are preserved
	($b = \tilde{b}$, $c = \tilde{c}$). The truncated exponent
	\begin{equation}
		\tilde{e} = \integer_c(\textcolor{exponent}{\bm{\tilde{e}}}) + b
	\end{equation}
	is the closest representable integer to $e$, and the tekum value $s 
	\cdot 3^{\tilde{e}}$ is therefore the optimal approximation of 
	$\tekum(\bm{t})$.
	\qed
\end{proof}
To give an example, consider rounding
$\bm{t} = 01\T\T\T1\T\T$ 
($\anchor_n(\bm{t})=001\T1110$, 
$s = 1$, $r = 1$, $c = 0$, $p=5$, $b = 1$,
$e = 1$, $f = 3^{-5} \cdot (-42) \approx -0.173,
\tekum(\bm{t})\approx 1.654$)
to four trits. We truncate the anchor to the four
trits $001\T$ and obtain the tekum
$1\T11$, which represents the value $2.0$ and is 
the closest representation (see 
Table~\ref{tab:example}).
\par
This property has far-reaching implications despite the small
computational overhead of determining the anchor. Conversion between
precisions is essentially cost-free, requiring only the truncation or
extension of the anchor. Well-known issues in binary floating-point
arithmetic, such as inaccuracies due to \enquote{double rounding},
do not arise here: Truncation can be performed in any number of steps
with identical results.
\subsection{Hardware Implementation}
Ternary logic hardware remains in its infancy and is largely orthogonal to 
contemporary chip design. The theoretical advantages of ternary logic are 
substantial, offering higher information density, owing to its superior radix 
economy, which is particularly relevant in the context of the memory wall that 
constrains computer performance, and consequently reduced circuit and 
interconnect complexity, lower power dissipation, and potentially faster 
operational speeds.
\par
Historically, apart from binary computers that merely emulate trits using two 
bits each, such as the Setun and Setun~70 computers \cite{2011-setun}, 
proposals for genuine ternary computers have relied on alternative 
technologies. These include \textsc{Josephson} junctions 
\cite{1998-hardware-josephson} and optical computing, where dark represents 
zero and two orthogonal polarisations of light encode $\T$ and $1$ 
\cite{2003-hardware-optical}. However, such approaches never progressed beyond 
experimental prototypes.
\par
Ternary computing has recently attracted renewed interest for two main reasons: 
advances in ternary large-language models (LLMs) and developments in Carbon 
Nanotube Field-Effect Transistors (CNTFETs). In the former case, single trits 
are used to represent weights within deep neural networks, with notable 
examples including BitNet \cite{2023-bitnet-1,2025-bitnet-2} and proposals for 
ternary hardware implementations of neural networks \cite{2025-hardware-llm}. 
Regarding CNTFETs, hundreds of publications in recent years have addressed 
advances ranging from general logic gates 
\cite{2024-hardware-1,2024-hardware-3} to arithmetic units such as adders 
\cite{2024-hardware-2-adder}. 
\par
In the context of implementing takums, a particular significance is attributed 
to an adder dedicated to adding $\T1\cdots\T1$ for computing the anchor 
function $\anchor_n$, which is central to decoding and rounding 
tekums. This capability could enable more efficient circuit designs tailored to 
the tekum format.
\par
It is noteworthy that the recent AI revolution may not only have catalysed the 
exploration of novel arithmetic formats, but could also indirectly foster a 
transition from binary to ternary computing.
\section{Conclusion}
This paper has introduced \emph{tekum}, a novel balanced ternary real 
arithmetic format, derived from three fundamental design challenges and 
evaluated against rigorous criteria for fair comparison between binary and 
ternary representations.
\par
Our analysis has demonstrated that tekums possess highly favourable numerical 
properties, not only in terms of precision and dynamic range, but also in 
offering features unique to the ternary domain. Unlike their binary 
counterparts, posits and takums, tekums simultaneously accommodate both 
$\infty$ and $\mathrm{NaR}$, while retaining the simplicity of negation by 
flipping the underlying trit string. Perhaps most strikingly, tekums enable 
rounding by truncation, a property that eradicates 
at a stroke some notorious 
problems of rounding in binary arithmetic: double 
rounding errors, cascading carries in hardware, and 
the attendant inefficiencies.
\par
Taken together, these properties position tekums not merely as a curiosity, but 
as a radical and compelling alternative to established binary real number 
formats. They demonstrate that balanced ternary arithmetic is not only viable, 
but potentially transformative. While substantial work remains, tekums stand as 
a bold step towards a new numerical foundation for future ternary computing, a 
foundation that could redefine the very architecture of computation itself.
\label{sec:conclusion}
\printbibliography
\end{document}